\newtheorem{theorem}{Theorem}[section]
\begin{document}

\begin{titlepage}
    \title{Complexity of Jelly-No and Hanano games with various constraints\footnote{A preliminary version of this article was presented in the Indonesia-Japan Conference on Discrete and Computational Geometry, Graphs, and Games (IJDCG3) 2023.} \footnote{This research was supported by the ANR project P-GASE (ANR-21-CE48-0001-01)}}
    \author{Owen Crabtree, Sorbonne Universit\'e, CNRS, LIP6, F-75005 Paris, France \and Valia Mitsou, IRIF, Universit\'e Paris Cit\'e, UMR 8243, Paris, France}
    \date{April 13, 2026}
    \maketitle
\end{titlepage}

\baselineskip=16pt


\begin{abstract}
    
    \textbf This work shows new results on the complexity of games Jelly-No and Hanano with various constraints on the size of the board and number of colours. 

    Hanano and Jelly-No are one-player, 2D side-view puzzle games with a dynamic board consisting of coloured, movable blocks disposed on platforms. These blocks can be moved by the player and are subject to gravity. Both games, created by Qrostar and available online, somehow vary in their gameplay, but the goal is always to move the coloured blocks in order to reach a specific configuration and make them interact with each other or with other elements of the game. In Jelly-No the goal is to merge all blocks of the same colour, which happens when they make contact. In Hanano the goal is to make all the coloured blocks bloom by making contact with flowers that have the same colour. 

    Jelly-No was proven by Chao Yang to be NP-complete under the restriction that all movable blocks have the same colour and NP-hard for more colours. Hanano was proven by Michael C. Chavrimootoo to be PSPACE-complete under the restriction that all movable blocks have the same colour. However, the question of PSPACE-completeness for Jelly-No with more than one colours was left open.
    
    In this paper, we settle this question, proving that Jelly-No is PSPACE-complete with an unbounded number of colours. We further show that, if we allow black jellies (that is, jellies that cannot and do not need to merge), the game is PSPACE-complete even for one colour. We further show that one-colour Jelly-No and Hanano remain NP-hard even if the width or the height of the board are constants.

    \textbf{Keywords:} Combinatorial games ; Complexity ; Hanano Puzzle ; Jelly-No Puzzle ; Motion planning ; NP-hard ; PSPACE-complete.
\end{abstract}

\section{Introduction}

    \subsection{One-player puzzle games and computational complexity} \label{intro1}

        Hanano and Jelly-No are both one-player puzzle games with a dynamic board consisting of coloured, movable blocks that can be moved to the right or left by the player and are subject to gravity. These games somehow vary in their gameplay, but the goal is always to move the coloured blocks in order to reach a specific configuration and make them interact with each other or with other elements of the game.     
        One-player puzzle games such as Hanano and Jelly-No can usually be linked to the following decision question: is a given level of this game solvable? If so, which succession of configurations can lead to solving the puzzle, i.e. which sequence of moves does the player need to do to reach a winning configuration? 
        
        We study these games either in the general case with no bound on the size of the board or the number of colours, or under some specific constraints. We consider the size of our input to be polynomial relatively to the dimensions of the board.
    
        Given a level of such one-player puzzle games, the memory needed to describe a configuration and check that one configuration is a direct successor of another is polynomial relatively to the size of the board: therefore these games are in PSPACE (a rigorous proof of this fact for the game Jelly-No will be given later on). To study the complexity of such games, it is interesting to know whether there exists for every solvable level a polynomial bound on the minimum number of moves leading to a winning configuration. If such a bound exists, then the decision problem corresponding to the game is in NP: it might not be possible to provide a solution in polynomial time, but for all solvable levels, there exists a succession of moves leading to a winning configuration that can be verified in polynomial time.
    
        However, in some one-player games, the number of possible configurations is exponential. This is usually the case when the moves are reversible. In this case, the computational complexity of a game can jump from NP to PSPACE: a level can be solved using a polynomial amount of space, but solutions might involve an exponential number of moves and therefore cannot even be checked in polynomial time.


    \subsection{The Jelly-No and Hanano puzzles} \label{gameplay}
    
        Jelly-No and Hanano are 2D side-view puzzle games represented by a grid of size $n\times{m}$ consisting of coloured blocks of different sizes and types, either disposed on fixed platforms or attached to walls. Those blocks can move to the left or right direction one step at a time. Besides, they are subject to gravity: in the absence of support from a block or platform under them, they fall until they reach one. Both games were developed by videogame creator Qrostar \cite{Qrostar} and are available online.
        
        In Jelly-No (see Figure \ref{fig:intro_jelly}), coloured blocks are called jellies. When two jellies of the same colour come into contact, they merge into a jelly whose dimensions are the sum of those of the original jellies (see Figure \ref{jelfusion}). Those jellies can sometimes be attached to a wall, and therefore impossible to move. 
       Some levels of the game include black jellies which can move and are subject to gravity, but cannot and do not need to merge together, as visible on Figure \ref{jellyexample}. Note that those black jellies are not necessarily rectangular: see levels 16, 17 or 36 \cite{Qrostar}.
        \begin{figure}[H]
            \begin{subfigure}[t]{0.6\textwidth}
            \centering
             \includegraphics[width=0.7\textwidth]{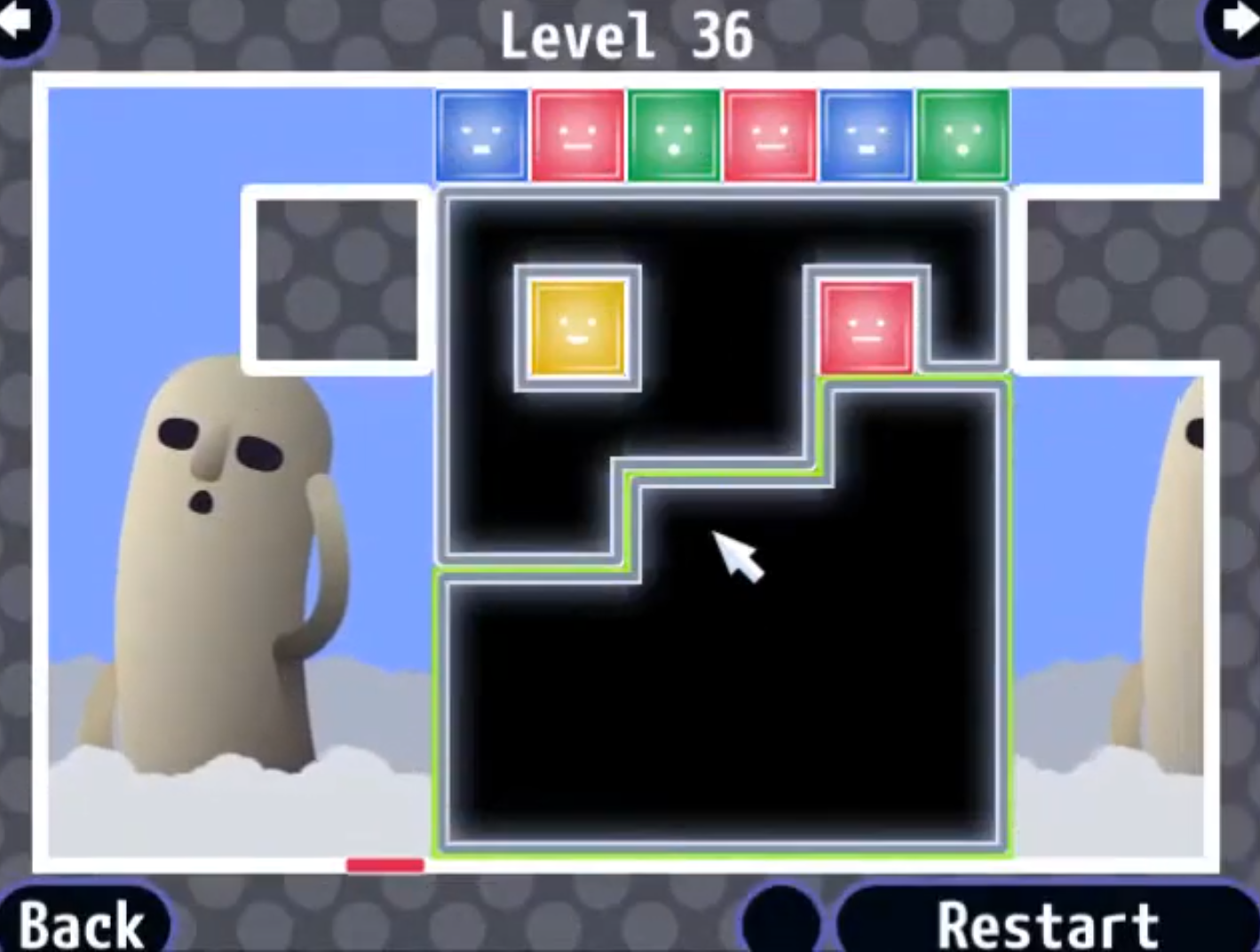}
             \subcaption[]{Level 36 of Jelly-No, with black jellies of various shapes}
            \label{jellyexample}
            \end{subfigure}
            \begin{subfigure}[t]{0.4\textwidth}
                \centering
                 \includegraphics[width=0.9\textwidth]{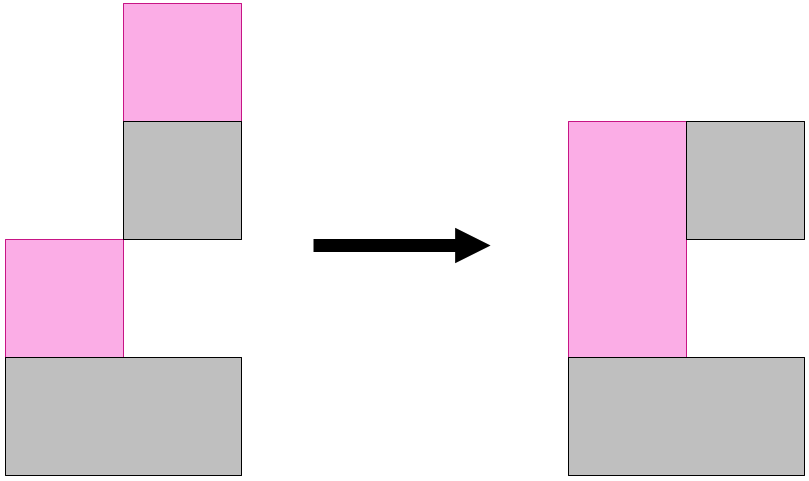}
                 \subcaption[]{The merging mechanism}
                 \label{jelfusion}
            \end{subfigure}
            \caption{The game of Jelly-No \cite{Qrostar}}
            \label{fig:intro_jelly}
       \end{figure}
        The goal of the game is to merge all jellies of similar colour into one single jelly. The detailed rules can be found in \cite{JellyNo}.

        Hanano (see Figure \ref{fig:intro_hanano}) contains coloured movable blocks of size $1\times 1$, coloured flowers represented by blocks of size $1\times 1$ marked with a flower pattern and movable grey blocks of any size or shape. See for example Figure \ref{hanexample}.
        In this game, the player can either move a coloured block, or horizontally swap two adjacent blocks. Coloured blocks are marked with an arrow: if a block touches a flower of the same colour, and if there is enough space to allow it, this block blooms and a new flower appears, attached to it, in the direction of the arrow. Note that this new flower can push blocks around it in order to make enough space to bloom. See Figure \ref{hanbloom} where flowers are marked with stripes and the letter ``F", light grey rectangles represent fixed platforms and dark grey rectangles represent grey movable blocks. If the new flower touches a block of the same colour, this block also blooms, creating a chain reaction. 
       \begin{figure}[H]
         \begin{subfigure}[t]{0.6\textwidth}
            \centering
             \includegraphics[width=0.7\textwidth]{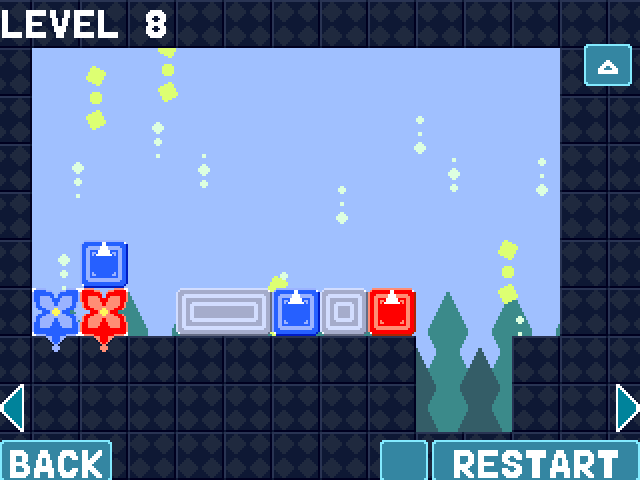}
             \subcaption[]{Level 8 of Hanano, with movable grey blocks}
             \label{hanexample}
            \end{subfigure}
            \begin{subfigure}[t]{0.4\textwidth}
                \centering
                 \includegraphics[width=0.9\textwidth]{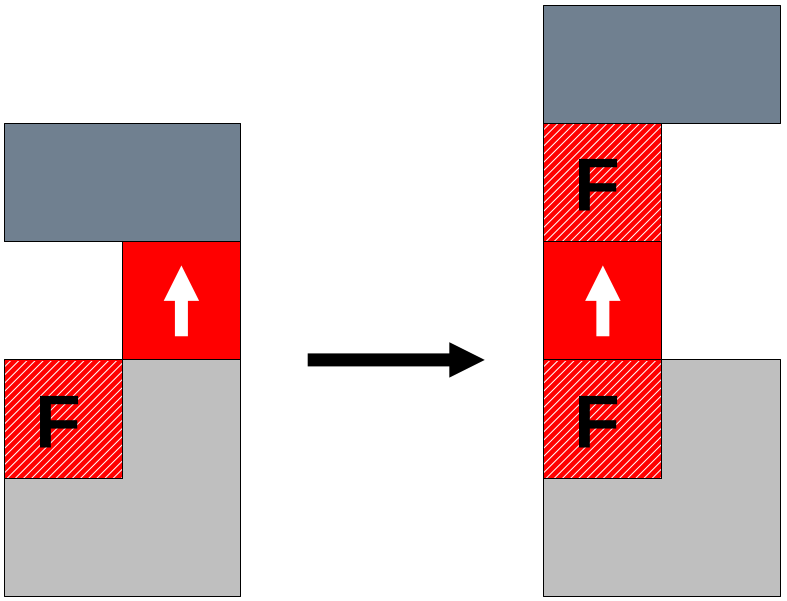}
                 \subcaption[]{The blooming mechanism}
                 \label{hanbloom}
            \end{subfigure}
            \caption{The game of Hanano \cite{Qrostar}.}
            \label{fig:intro_hanano}
       \end{figure}
       Similarly to Jelly-No, movable blocks are subject to gravity.        
       Coloured blocks and flowers can be red, blue or yellow. Flowers are all attached to a block or platform and move if, and only if, the block they are attached to moves. 
        The goal of the game is to make all the coloured blocks bloom. The detailed rules can be found in \cite{HananoJ}.

    \subsection{Problems studied and state of the art}
    
        In this work, we study the complexity of the two following problems under various constraints:
    
        \textsc{Jelly} = \{ $J$ | $J$ is a solvable level of Jelly-No \}

        \textsc{Hanano} = \{ $H$ | $H$ is a solvable level of Hanano \}.

        Constraints can affect either the size (height or width) of the board or the number of colours. Restrictions on the size, unless very strong, do not necessarily lower the complexity classes of games: for example, the famous puzzle game Tetris was shown by Sualeh Asif, Michael Coulombe, Erik D.Demaine, Martin L. Demaine, Adam Hesterberg, Jayson Lynch and Mihir Singhal \cite{Tetris} to be NP-hard even when restricting the size of the board to 8 columns or 4 rows, although it turns out to be polynomial for 2 columns or 1 row. When restricting \textsc{Jelly} to 1 colour, we assume that we cannot use black jellies unless stated otherwise. Indeed, using $n$ black jellies is equivalent to using $n$
jellies each of a distinct colour, strongly increasing game complexity. 
    
        \textsc{Jelly} was shown by Chao Yang \cite{JellyNo} to be in NP and NP-hard under the restriction that all movable blocks have the same colour. Note that a playable version of the reduction presented in \cite{JellyNo} has been implemented by Giorgio Ciotti \cite{IsNPHard}, allowing for a better understanding of the reduction.
 However, PSPACE-completeness for more than one colours was left as an open question. Evidence towards the fact that the game for more than one colours is more complex was already presented in \cite{JellyNo} where the author shows the existence of levels that require an exponential number of steps to solve.
        
        \textsc{Hanano} was shown by Ziwen Liu and Chao Yang \cite{HananoNPHard} to be NP-hard even under the restrictions that (1) all movable blocks and flowers have the same colour (2) there are no movable grey blocks and (3) blocks can only bloom upwards. It was then shown in a later work by Michael C. Chavrimootoo \cite{HananoJ} to be PSPACE-complete under the restrictions that (1) all movable blocks and flowers have the same colour and (2) blocks can only bloom upwards.
        An important result of this paper, which we reuse in this work, was to circumvent the effects of gravity in \textsc{Hanano} making some moves irreversible.
        
    \subsection{Contributions}

        The rest of this work is organised as follows:
        
        \begin{enumerate}
            \item {In Section \ref{generalJelly}, we show that \textsc{Jelly} is PSPACE-hard if we allow to use an unbounded number of colours, and strengthen this result by adapting the reduction so that we only need one single colour and the use of black jellies, providing new results on an open question raised by Chao Yang in \cite{JellyNo}. We then show that \textsc{Jelly} is in PSPACE in the general case, resulting in PSPACE-completeness.}
            \item {In Section \ref{restrictedJelly}, we study \textsc{Jelly} with constraints on one dimension of the board and the number of colours. We show, by reduction from the \textsc{3-Partition} problem, that the following are NP-hard: 1-Colour \textsc{Jelly} where the height of the board is limited to nine lines, 2-Colour \textsc{Jelly} where the height is limited to three lines and 1-Colour \textsc{Jelly} where the width of the board is limited to five columns. }
            \item {In Section \ref{restrictedHanano}, we study \textsc{Hanano} with similar constraints. We show that 1-Colour \textsc{Hanano} is still NP-hard when the length of the board is limited to six columns or when the height is limited to eleven lines.}
        \end{enumerate}
        

\section{JELLY with black jellies and no bound on the size of the board}
\label{generalJelly}

    We study the complexity of \textsc{Jelly} with no bound on the size of the board. Our main result is that allowing black jellies in 1-Colour \textsc{Jelly} renders the game PSPACE-complete.
    
    We prove first that it is PSPACE-complete with an unbounded number of colours. Then, we strengthen the result to 1-Colour \textsc{Jelly} where black jellies are allowed.

    \subsection{Previous works}

    To prove PSPACE-hardness in both cases, we construct a reduction from a variation of the Nondeterministic Constraint Logic (NCL) problem, a graph problem presented and studied by Robert A. Hearn and Erik D. Demaine \cite{ncl}. This problem was used by Michael C. Chavrimootoo to study \textsc{Hanano} in \cite{HananoJ} and is very useful to prove PSPACE-hardness of games of sliding or pushing blocks. 

    Our reduction borrows many ideas from the work of Michael C. Chavrimootoo \cite{HananoJ}, notably the use of visibility representations and the idea of representing vertices as vertex gadgets to be solved and flows as blocks (in our case jellies) travelling between the vertex gadgets through horizontal tunnels.

    A major difference between the two games is that in \textsc{Hanano}, the blooming mechanism allows to push a block upward by creating a new flower block. In \textsc{Jelly} (at least for the simplest version of the game that we consider in this work), there is no way for a jelly to go upward and no new block of jelly can be created. Besides, in \textsc{Jelly}, solving a level requires bringing together all jellies of the same colour. All these differences required to adapt the ideas presented in \cite{HananoJ} in order to build suitable gadgets.

    
    \subsection{JELLY with an unbounded number of colours}
    
    The fact that \textsc{Jelly} is in PSPACE can be proven by using standard arguments. The proof can be found in Section \ref{sec:inPspace} (Theorem \ref{inPspace}). In this paragraph and the next one, we shall focus our efforts in proving that the game is in fact PSPACE-hard.
    
    First, we study the complexity of \textsc{Jelly} with an unbounded number of different colours for the jellies. We prove the following result:
    
    \begin{theorem}
        \label{JellyPHard}
        \textsc{Jelly} with an unbounded number of colours is PSPACE-hard.
    \end{theorem}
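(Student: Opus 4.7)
The plan is to reduce from Nondeterministic Constraint Logic (NCL) on AND/OR constraint graphs, which Hearn and Demaine showed to be PSPACE-complete, following the general blueprint that Chavrimootoo used for \textsc{Hanano} but adapting it to the more restrictive mechanics of \textsc{Jelly}. Given an NCL instance $(G, e^\star)$ with target edge $e^\star$, I would first fix a visibility representation of $G$, drawing each vertex as a horizontal segment and each edge as a vertical segment, and then instantiate this drawing inside a $\mathrm{poly}(|G|)\times\mathrm{poly}(|G|)$ grid: each vertex becomes a \emph{vertex gadget} placed along its horizontal segment, and each edge becomes a horizontal \emph{tunnel gadget} whose two ends plug into the two vertex gadgets it connects. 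The orientation of an NCL edge $e$ is encoded by the position, left or right, of a travelling ``token'' jelly of a colour $c_e$ used only for this edge; flipping the orientation of $e$ in NCL corresponds to sliding this token across its tunnel.

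Each tunnel of colour $c_e$ would contain two jellies of colour $c_e$, one fixed to a platform near each endpoint; a token jelly is merged with the ``source'' endpoint jelly when the edge points into its source vertex and must be moved across and merged with the ``target'' jelly to flip the edge. Because jellies cannot move upward and no new blocks can be created, tunnels can simply be horizontal corridors separated vertically, and by giving every edge its own colour I automatically prevent interference between distinct tunnels sharing a row or column. The target $e^\star$ is encoded by a dedicated colour whose final merge requires $e^\star$ to point into its designated vertex; in order to merge the last remaining pairs of jellies for every other colour, the solver must also return all other tokens to a canonical position, so the winning configuration exists if and only if $(G, e^\star)$ is a yes-instance of NCL.

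The heart of the construction, and the step I expect to be the main obstacle, is the design of AND and OR vertex gadgets that correctly enforce the in-weight constraint of NCL using only gravity and horizontal pushes. Concretely, at each vertex $v$ of weight-$2$ capacity I need to arrange the endpoints of its incident tunnels, together with auxiliary fixed platforms and ``helper'' coloured jellies, so that a token can be pulled out of the vertex (reversing an edge to point outward) only when the remaining incoming tokens provide weight at least $2$ --- otherwise some helper jelly falls into a position from which it can never again merge with its partner, locking the level. The delicate point is to make this locking \emph{truly irreversible} in \textsc{Jelly}: without the blooming mechanism of \textsc{Hanano}, I cannot rely on newly created flowers to act as wedges and must instead let gravity commit jellies to dead-end cells from which no horizontal sequence recovers them. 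I would design the AND gadget by stacking two interlocking ``latch'' sub-gadgets, each controlled by one blue incoming edge, and the OR gadget with three symmetric latches sharing a central support.

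Finally, I would prove the reduction correct by two directions: a yes-instance of NCL yields, via a step-by-step simulation of edge flips by token moves through the corresponding tunnel and latch, a solving sequence of the \textsc{Jelly} level that ultimately merges each colour class; conversely, any solution of the \textsc{Jelly} level induces, by projecting each token's crossing of its tunnel to an edge flip in $G$, a legal NCL play that reaches the target configuration, since the gadget lockings forbid any sequence of jelly moves whose projection would violate the NCL in-weight constraint. Combined with Theorem \ref{inPspace}, stated later in the paper, this yields PSPACE-completeness with an unbounded number of colours.
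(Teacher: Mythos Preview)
Your high-level plan (reduce from NCL via a visibility representation, encode edge orientations by the position of a travelling jelly in a horizontal tunnel, and enforce the inflow constraint with ``locking'' gadgets that rely on irreversible falls) is exactly the route the paper takes. However, the concrete encoding you sketch has a genuine gap that would make the reduction fail.

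You propose that each tunnel for edge $e$ carries two fixed jellies of colour $c_e$, one at each end, and that the token jelly ``is merged with the source endpoint jelly when the edge points into its source vertex and must be moved across and merged with the target jelly to flip the edge.'' In \textsc{Jelly}, merging is irreversible: once the token merges with the source endpoint, it becomes a single block and can never be separated again, so the edge can be flipped at most once. NCL crucially needs edge flips to be reversible an unbounded number of times; your encoding therefore cannot simulate NCL play. The paper avoids this by \emph{not} merging edge jellies with anything during the reconfiguration phase: each edge is represented by a single jelly (width~2 for blue, width~1 for red) of a colour that matches nothing it can contact, so it slides freely back and forth in its tunnel. Merging happens only for the per-vertex ``vertex jellies'' (two jellies of a colour unique to that vertex, separated by a width-2 gap), and only at the very end, once the gap is filled by edge jellies; the target edge $e^\star$ is handled by planting one extra jelly of colour $c_{e^\star}$ in the floor of the designated vertex gadget.

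Two smaller points. First, your description of the visibility representation is inconsistent: you say vertices are horizontal and edges vertical, then immediately speak of horizontal tunnels. The paper deliberately takes vertices vertical and edges horizontal (the transpose of Tamassia--Tollis), precisely so that tokens can slide in tunnels without fighting gravity; you should commit to that orientation. Second, your AND/OR sketch has the colours swapped (AND vertices have two \emph{red} and one \emph{blue} edge, OR vertices have three blue), and the real work in the paper is not a single AND and OR gadget but sixteen variants, one per arrangement of the three incident tunnels (left/right, top/middle/bottom) in the visibility representation, each built so that a carefully shaped black helper jelly is supported as long as one blue or two red edge jellies remain inside, and otherwise falls and blocks the vertex-jelly merge.
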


        
        \subsubsection{NCL graph and visibility representation}
    
        An NCL graph is a directed graph that meets the following constraints:
        \begin{enumerate}
            \item Edges are of weight either one (red edges) or two (blue edges).
            \item For every vertex, the incoming flow is at least two (we call this constraint the minimum inflow requirement).
        \end{enumerate}
        Given such graph $G = (V, E)$, given an edge $e \in E$, the problem is deciding whether there exists a sequence of edge direction flips such that the direction of $e$ can be flipped and all intermediary graphs as well as the resulting graph are still NCL. Another way to view this problem, which is the one used in \cite{ncl}, is to consider the graph as undirected and each assignment of directions as a configuration: moving from one configuration to another means flipping the direction of an edge such that the resulting configuration still meets the incoming flow requirement constraint. The problem would then be to find a sequence of configurations that result in $e$ being flipped. This problem happens to be PSPACE-complete even with the following additional constraints \cite{ncl} (see Figure \ref{NCL}):
        \begin{enumerate}  
            \setcounter{enumi}{2}
            \item All vertices have exactly three incident edges.
            \item Vertices follow two possible patterns: AND (two incident red edges and one incident blue edge) or OR (all three incident edges are blue) (see Figure \ref{AndOr}).
            \item The graph is simple (without loops or multiple edges) and planar (it can be drawn in two dimensions without any of its edges crossing).
        \end{enumerate}
        \begin{figure}[H]
            \begin{minipage}[t]{0.5\textwidth}
                \centering
                 \includegraphics[width=0.6\textwidth]{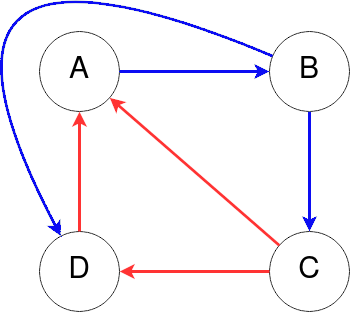}
                 \caption[]{Graph respecting our constraints}
                 \label{NCL}
             \end{minipage}
             \begin{minipage}[t]{0.5\textwidth}
                \centering
                 \includegraphics[width=\textwidth]{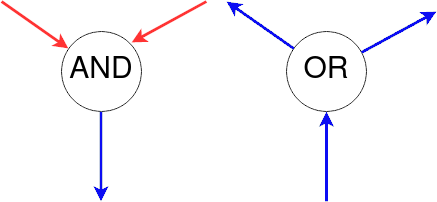}
                 \caption[]{OR and AND patterns}
                \label{AndOr}
            \end{minipage}
        \end{figure}
    
        In order to reduce the NCL problem to \textsc{Jelly}, all we need to do is construct gadgets to emulate AND and OR vertices as well as a mechanism for flipping edges and ensure that all the constraints are respected. However, due to gravity, some moves in \textsc{Jelly} are irreversible, which is not the case in the NCL problem. The problem of circumventing the effects of gravity in \textsc{Hanano} was treated very efficiently in \cite{HananoJ} by using the notion of visibility representation: a graph representation for planar graphs studied notably by Roberto Tamassia and Ioannis G.Tollis, who presented definitions as well as linear-time algorithms for constructing such representations \cite{Tamassia1986}. The idea used in \cite{HananoJ}, which we reuse here, was to work not from the graph $G$ but from its visibility representation.
    
        Given a planar graph $G$, a visibility representation is a 2D representation of $G$ that maps vertices to vertical segments and edges to horizontal ones (possibly arrows if $G$ is directed) such that if two vertices are linked by an edge, then the corresponding vertical segments are also linked by a horizontal segment. Otherwise, segments do not intersect each other. See example on Figure \ref{visibrep}.
    
        \begin{figure}[H]
            \centering
            \includegraphics[width=0.38\textwidth]{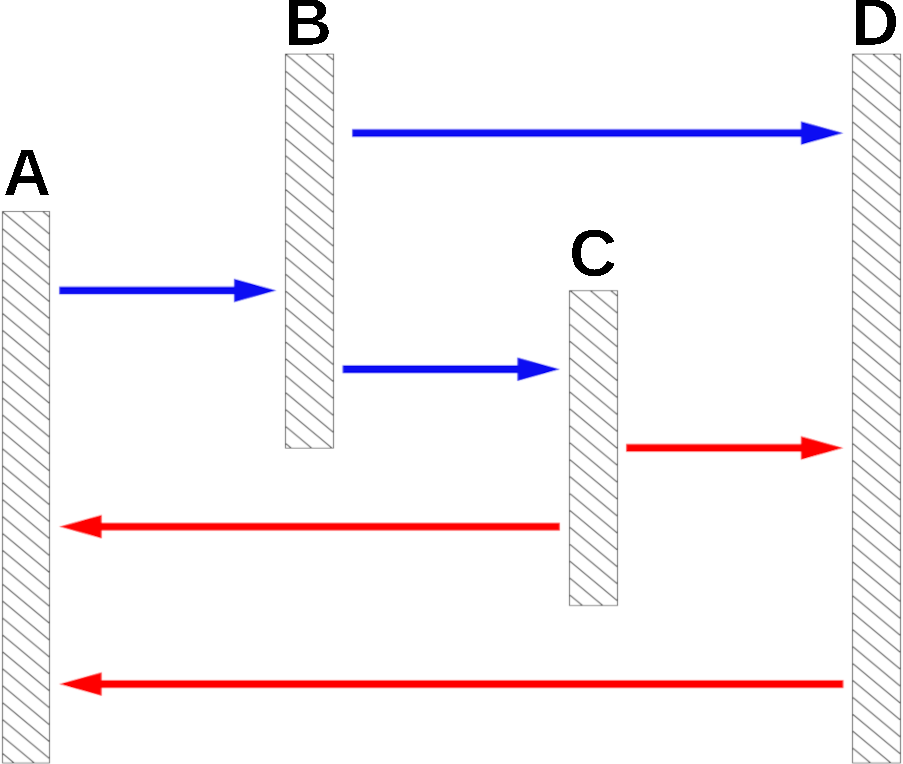}
            \caption{Visibility representation corresponding to graph shown in Figure \ref{NCL}, with vertices represented by rectangles instead of vertical segments for a better visibility.}
            \label{visibrep}
        \end{figure}
    
        For the sake of our reduction, we also assume that two horizontal segments do not share $y$ coordinates. We can note that in \cite{Tamassia1986} the vertices were represented by horizontal segments and the edges by vertical ones, but the definition used in our work, which is also the one used in \cite{HananoJ}, is equivalent and more adapted to the purpose of our reduction.
        
        Given the visibility representation for $G$, each vertex can be annotated with the edges to which it is connected and their position relatively to the other incident edges: left or right and top, middle or bottom. We denote by ``B" (resp. ``R") segments representing blue (resp. red) edges, with over arrows to indicate the side (left or right) to which the edge connects to the vertex. 
        For example, in Figure \ref{visibrep}, vertex $A$ can be represented by ($\overrightarrow{B}, \overrightarrow{R}, \overrightarrow{R}$), meaning it has no incident edge on its left side, a blue edge at the top of its right side and red edges at the middle and bottom of its right side; vertex $B$ can be represented by ($\overrightarrow{B}, \overleftarrow{B}, \overrightarrow{B}$), vertex $C$ by ($\overleftarrow{B}, \overrightarrow{R}, \overleftarrow{R}$) and vertex $D$ by ($\overleftarrow{B}, \overleftarrow{R}, \overleftarrow{R}$). Note that the direction of each over arrow does not correspond to the direction of the corresponding edge, but to the side, which depends only on vertex order in the visibility representation. A change of edge direction does not change this side.
        This notation, similar to the one used in \cite{HananoJ}, allows us to list all the possible configurations for vertices in a visibility representation.
        
        With these tools and notations in mind, we can now build our reduction.


        \subsubsection{Reduction from NCL to \textsc{Jelly}}
        
        Let us first present a global idea of the reduction through an example, before detailing the construction.
        
        Starting from a given NCL graph $G = (V, E)$ (see Figure \ref{NCL}), we first compute the corresponding visibility representation.

        Each vertex is represented by a gadget which we call ``vertex gadget".
        Due to the structure used, instead of only having to build gadgets for AND and OR patterns, we now need to show existence of vertex gadgets corresponding to each possible configuration of adjacent edges in the visibility representation. However, results from \cite{HananoJ} proved that three such gadgets are enough to build all remaining configurations. The first two gadgets should correspond to an OR and an AND patterns: let us build for example $(\overleftarrow{B}, \overrightarrow{B}, \overleftarrow{B})$ for OR and $(\overleftarrow{B}, \overleftarrow{R}, \overrightarrow{R})$ for AND. The third gadget is called ``red (resp. blue) bend". It is similar to a vertex gadget, except it only has two incident edges on the same side, and its solving zone slightly differs. We use it to change the direction of a tunnel in a visibility representation. Figures \ref{coloured} and \ref{fig:visibility_to_reduction} show a direct visibility representation for Figure \ref{NCL} and its corresponding representation using only configurations $(\overleftarrow{B}, \overrightarrow{B}, \overleftarrow{B})$, $(\overleftarrow{B}, \overleftarrow{R}, \overrightarrow{R})$ and red or blue bends. We obtain vertices $A$ and $D$ by combining $(\overleftarrow{B}, \overleftarrow{R}, \overrightarrow{R})$ gadgets and red bends, and $C$ by combining a $(\overleftarrow{B}, \overrightarrow{B}, \overleftarrow{B})$ gadget and a blue bend.
        
        \begin{figure}[H]
            \begin{minipage}[t]{0.4\textwidth}
                \centering
                \includegraphics[width=\textwidth]{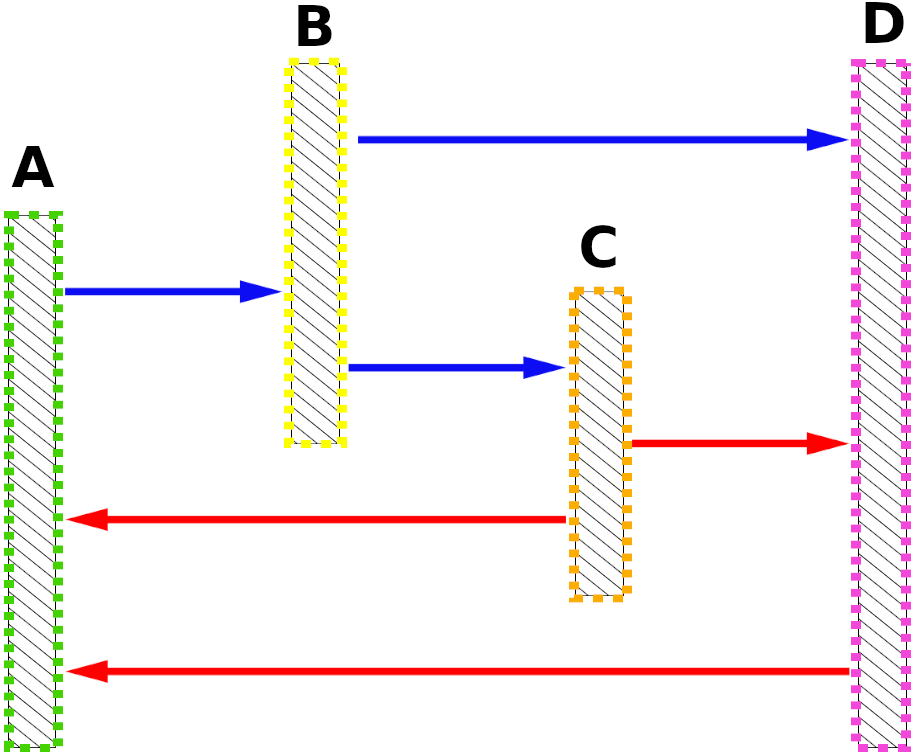}
                \caption{Visibility representation for Figure \ref{NCL}. Each vertex has a distinct colour.}
                \label{coloured}
            \end{minipage}
            \begin{minipage}[t]{0.02\textwidth}
                \;\;
            \end{minipage}
            \begin{minipage}[t]{0.57\textwidth}
                \centering
                \includegraphics[width=\textwidth]{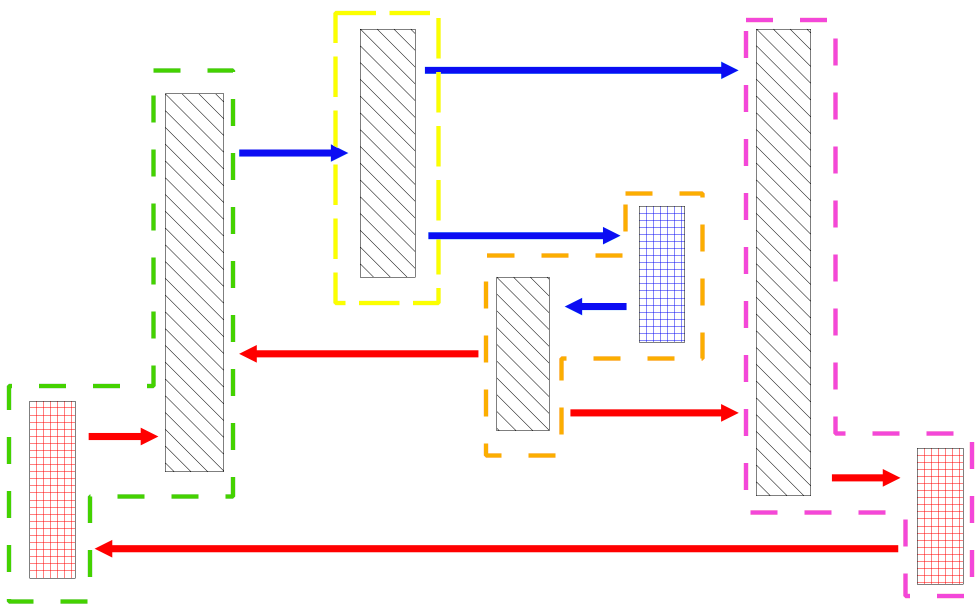}
                \caption{Visibility representation for Figure \ref{NCL} with restricted configurations. Chequered rectangles represent red (resp. blue) bends.}
                \label{fig:visibility_to_reduction}
            \end{minipage}
        \end{figure}
       
        For each edge $e = (u, v) \in E$, we link the two gadgets for $u$ and $v$ by a horizontal tunnel with a jelly of width 1 or 2 depending on the flow in $e$. 
        If the jelly corresponding to edge $e$ is in the vertex gadget for $u$, it corresponds to $e$ being oriented toward $u$ in the graph.
        
        Let $e = (A, B)$ be the edge which we want to flip to $(B, A)$. We will sometimes call $A$ the ``target vertex". We add a fixed, extra jelly of the same colour as the one corresponding to $e$ to the bottom of the gadget corresponding to $A$ to enforce the condition that $e$ must be flipped. This representation gives us the following level (Figure \ref{fullLevel}), where the jelly for $e$ and the extra jelly are circled in blue and each vertex gadget has the colour corresponding to the one in Figure \ref{coloured}.
    
        \begin{figure}[H]
            \centering
            \includegraphics[width=0.9\textwidth]{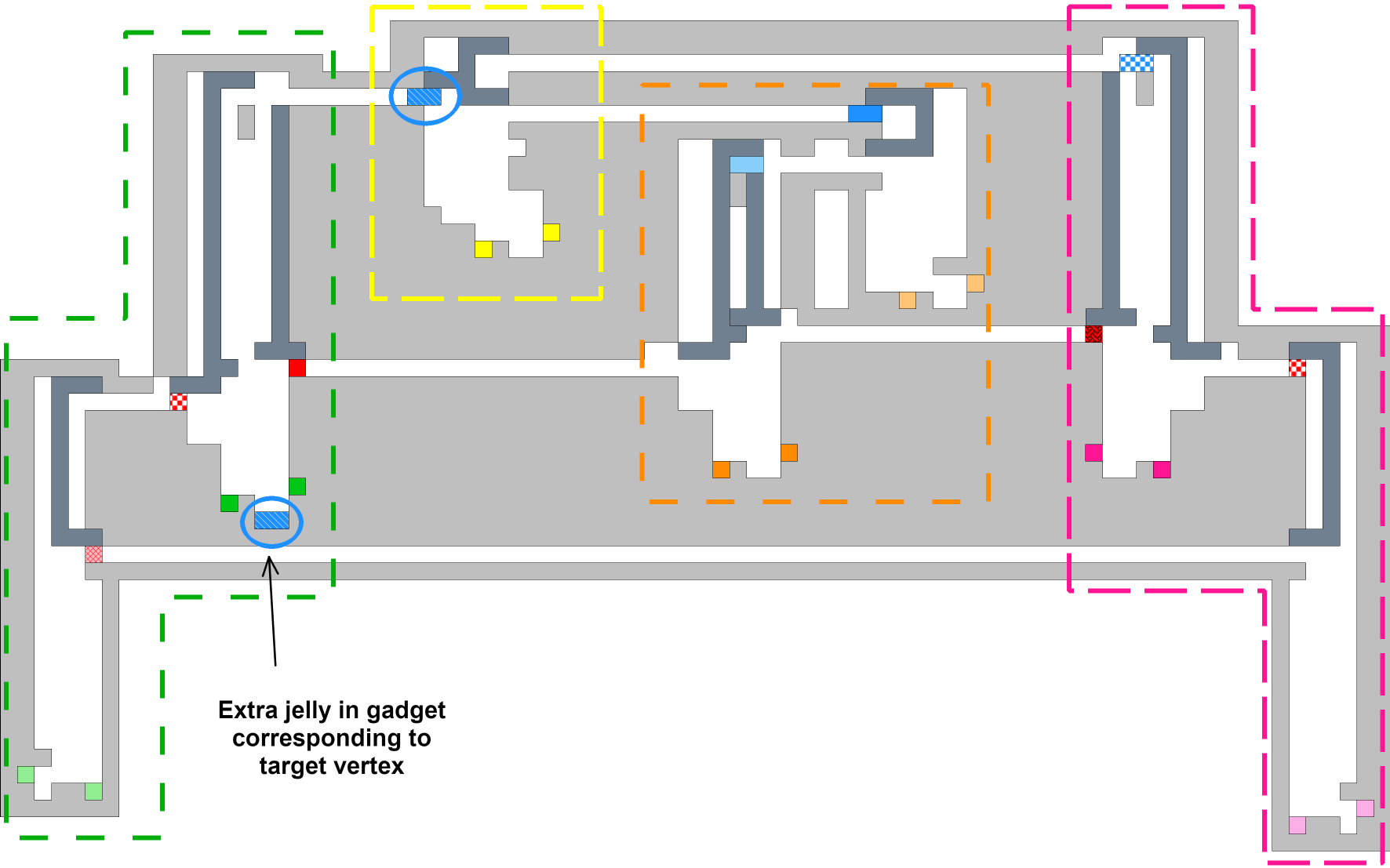}
            \caption{Reduction corresponding to Figure \ref{fig:visibility_to_reduction}. Each dotted form corresponds to a vertex in the original graph from Figure \ref{NCL}.}
            \label{fullLevel}
        \end{figure}
        
        Let us now detail the proof of Theorem \ref{JellyPHard}.

        \begin{proof}
        
        We consider the following problem:
        
        \textsc{Instance:} Let $G = (E, V)$ be a simple planar NCL graph such that all vertices have exactly three incident edges and follow either an OR or AND pattern. Let $e = (u, v) \in E$. 
        
        \textsc{Question:} Is there a succession of edge flips such that $e$ can be flipped and all intermediate steps, as well as the resulting graph, are still NCL?
        
        To reduce our problem to \textsc{Jelly}, we begin by computing the visibility representation corresponding to $G$, which can be done in linear time \cite{Tamassia1986}. 
        Using results from \cite{HananoJ}, we compute the equivalent representation using only configurations $(\overleftarrow{B}, \overrightarrow{B}, \overleftarrow{B})$, $(\overleftarrow{B}, \overleftarrow{R}, \overrightarrow{R})$ and red or blue bends. We obtain each configuration with a constant number of combinations, thus in linear time.

        In the rest of the paragraph, we name by ``vertices" and ``edges" the corresponding vertical shapes and horizontal segments in the representation, with no distinction between OR/AND patterns and red or blue bends. Let $V', E'$ be the sets of ``vertices" and ``edges" in the representation.
        Since our version of \textsc{Jelly} has an unbounded number of colours, we can assign one colour to each vertex and one colour to each edge of our representation. 
        Let $n_i, b_i^1, b_i^2, b_i^3, i \in [1, |V'|]$ and $m_j, j \in [1, |E'|]$ be the different colours used in our reduction.

        We build vertex gadgets corresponding to $(\overleftarrow{B}, \overrightarrow{B}, \overleftarrow{B})$ and $(\overleftarrow{B}, \overleftarrow{R}, \overrightarrow{R})$ configurations and red or blue bends. Those gadgets are linked by tunnels of height 1, which correspond to the edges.

        In tunnels corresponding to blue edges, we place a jelly of width 2 and height 1, which we call a ``blue edge jelly" (or ``blue jelly" for short), and similarly in tunnels corresponding to red edges, a ``red (edge) jelly" of size $1\times1$, such that each jelly corresponding to edge $e_j$ has a unique colour $m_j$. Another option, equivalent to using distinct colours, is to use black jellies which cannot merge. These edge jellies can move from one incident vertex gadget to the other through their assigned tunnel, representing a change of direction of the corresponding edge in the graph. 
        
        For readability purposes we represent blue (resp. red) edge jellies in blue (resp. red) on the figures, but with variations of patterns to indicate that in the reduction, they should all either have distinct colours or be black jellies. Our constructions also use these black jellies, which we represent in dark grey for readability purposes. Light grey blocks represent fixed, unmovable walls and platforms. Note that we could also represent these black jellies using unique colours $b_i^1, b_i^2, b_i^3$ for each, therefore we do not strictly need to use this feature. We use the property mentioned in Section \ref{gameplay} that black jellies can have any shape or size an ordinary jelly could have, and not necessarily rectangular or square.

        For each gadget, we must make sure that:
        \begin{enumerate}
            \item {Edge jellies can only access the two vertex gadgets incident to their assigned tunnel, and no other jelly can travel through this tunnel.}
            \item{At every step of the game, each gadget must contain at least one blue edge jelly or two red edge jellies; otherwise the puzzle becomes unsolvable.}
            \item{As long as the part of the puzzle corresponding to some vertex gadget is not solved and the previous constraint is respected, moving the edge jellies in and out of the gadget must be reversible.}
            \item{The height of the vertex gadgets, especially the vertical space between the tunnels, should easily be modified to fit the visibility representation, where vertices have different heights to make sure the tunnels do not cross.}
        \end{enumerate}
        
        All gadgets use a similar idea: we place the entries for all three tunnels at the corresponding position in the vertex gadget and the corresponding edge jelly in this tunnel. At the bottom of the gadget, there are two jellies of the same colour $n_i$ (specific to each vertex) separated by a gap of width 2 and height 1, which we call ``vertex jellies". One of these jellies, on the right, is fixed inside the wall so it cannot move. The other, on the left, can move. The only way for the jellies to merge is for the gap to be filled by using either one blue edge jelly or two red edge jellies. Black jellies ensure that each gadget respects the constraints at any given time.

        Let us now consider the edge jelly corresponding to edge $e$ chosen at the beginning of the game, which we want to flip from $(u,v)$ to $(v,u)$. This corresponds in our reduction to forcing the edge jelly to be in the gadget corresponding to target vertex $u$ at the end of the game. To ensure this constraint, we add a jelly of the same colour in the ground of the gap at the bottom of the gadget corresponding to $u$. Therefore the only way to complete the level is to stick the edge jelly at the bottom of the gadget, which corresponds to fixing the direction of the edge.

        We start by constructing the OR vertex gadget, corresponding to configuration $(\overleftarrow{B}, \overrightarrow{B}, \overleftarrow{B})$. The gadget is visible in Figure \ref{fig:or}.
        In this construction, one blue jelly can support the black jelly alone. If a blue jelly falls either in or under it, it can still access the gap separating the vertex jellies and solve the gadget even without support from the other blue jellies (see Figures \ref{fig:or_solved1} and \ref{fig:or_solved2}). Otherwise, the black jelly blocks access to the gap and the puzzle cannot be solved. This ensures the minimum inflow requirement.

        \begin{figure}[H]
            \begin{minipage}[t]{0.3\textwidth}
                \centering
                \includegraphics[width=\textwidth]{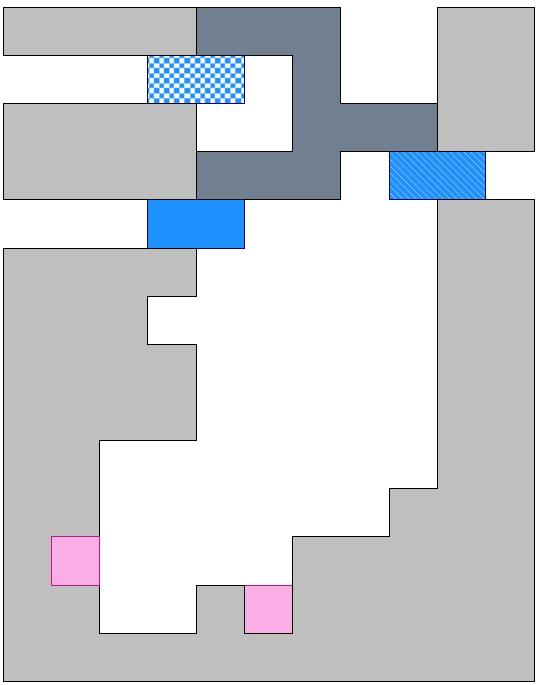}
                \caption{$(\overleftarrow{B}, \overrightarrow{B}, \overleftarrow{B})$ vertex gadget.}
                \label{fig:or}
            \end{minipage}
            \begin{minipage}[t]{0.03\textwidth}
                \;\;
            \end{minipage}
            \begin{minipage}[t]{0.3\textwidth}
                \centering
                \includegraphics[width=\textwidth]{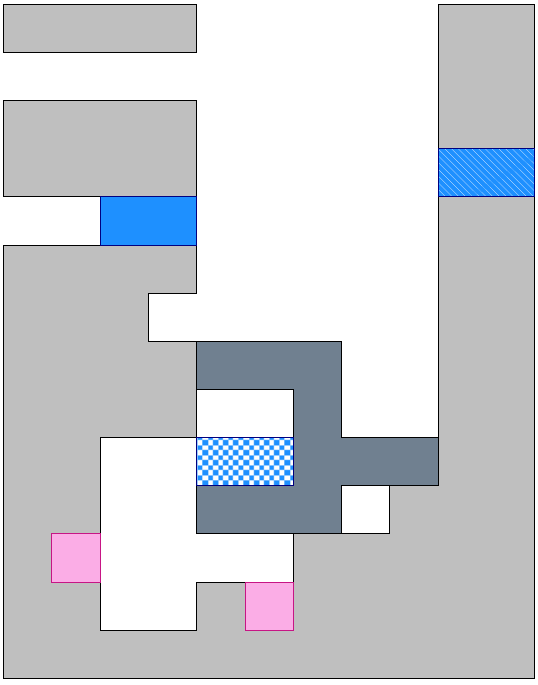}
                \caption{First solution using top blue jelly.}
                \label{fig:or_solved1}
            \end{minipage}
            \begin{minipage}[t]{0.03\textwidth}
                \;\;
            \end{minipage}
            \begin{minipage}[t]{0.3\textwidth}
                \centering
                \includegraphics[width=\textwidth]{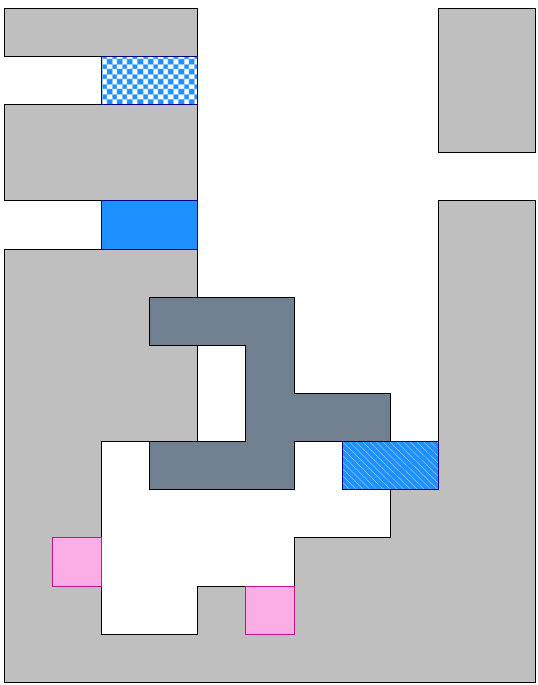}
                \caption{Second solution using middle or bottom jelly.}
                \label{fig:or_solved2}
            \end{minipage}
        \end{figure}

        The next step in the reduction is constructing the AND vertex gadget, corresponding to configuration $(\overleftarrow{B}, \overleftarrow{R}, \overrightarrow{R})$. 
        We must ensure that at any given time, the puzzle is solvable if, and only if, at least one blue or two red jellies are present in the gadget.
        Figure \ref{fig:and} presents this gadget. We use two black jellies, which can either be supported by either two red jellies, or a blue jelly alone as visible in Figure \ref{fig:and_blue}.
        The red jellies can slide under black jellies to fill the bottom gap. To solve the gadget with a blue jelly instead, one can move the black jellies to the sides as long as the blue jelly stays in place, as visible on Figure \ref{fig:and_solved}. Once again, if the blue jelly and at least one of the red jellies leave, the black jellies fall and block access to the gap. This ensures the minimum inflow requirement.

        \begin{figure}[H]
            \begin{minipage}[t]{0.3\textwidth}
                \centering
                \includegraphics[width=\textwidth]{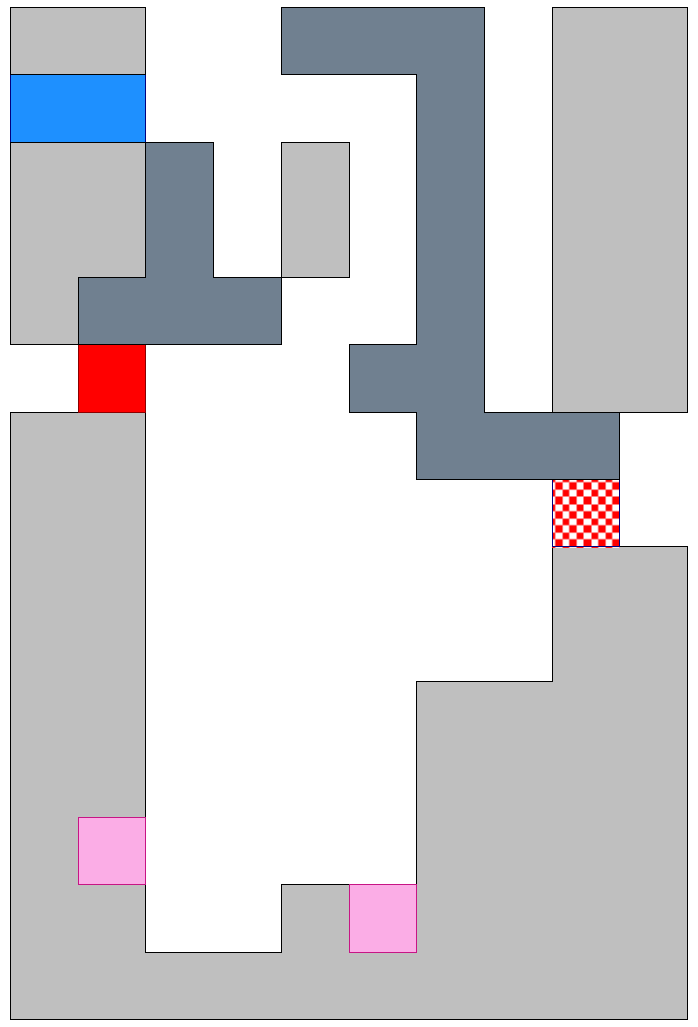}
                \caption{$(\overleftarrow{B}, \overleftarrow{R}, \overrightarrow{R})$ vertex gadget.}
                \label{fig:and}
            \end{minipage}
            \begin{minipage}[t]{0.03\textwidth}
                \;\;
            \end{minipage}
            \begin{minipage}[t]{0.3\textwidth}
                \centering
                \includegraphics[width=\textwidth]{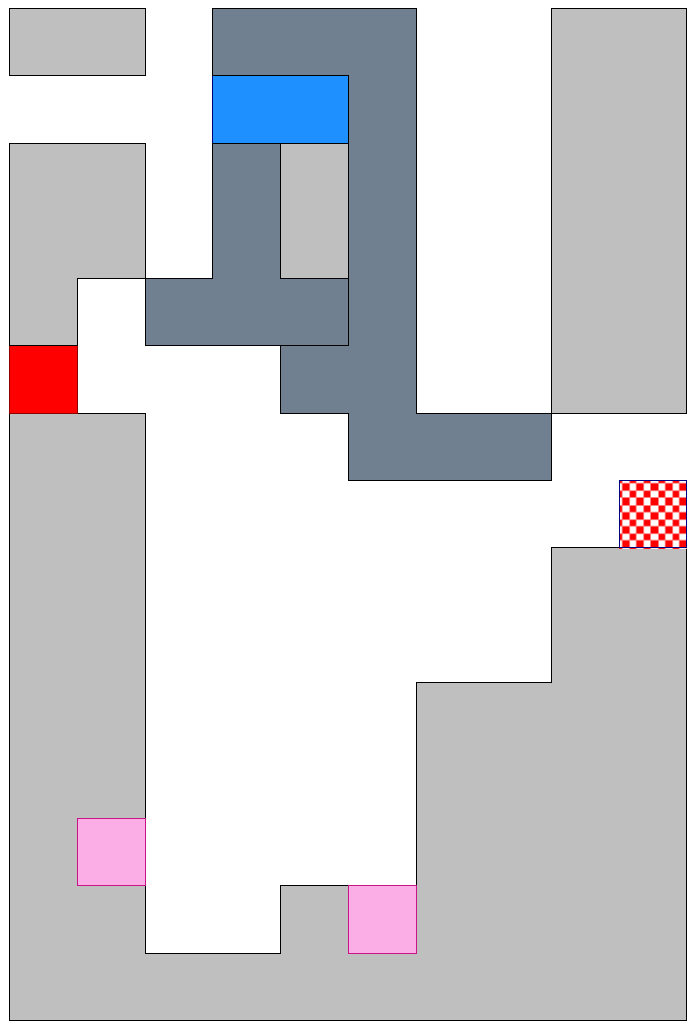}
                \caption{Holding black jellies with blue jelly.}
                \label{fig:and_blue}
            \end{minipage}
            \begin{minipage}[t]{0.03\textwidth}
                \;\;
            \end{minipage}
            \begin{minipage}[t]{0.3\textwidth}
                \centering
                \includegraphics[width=\textwidth]{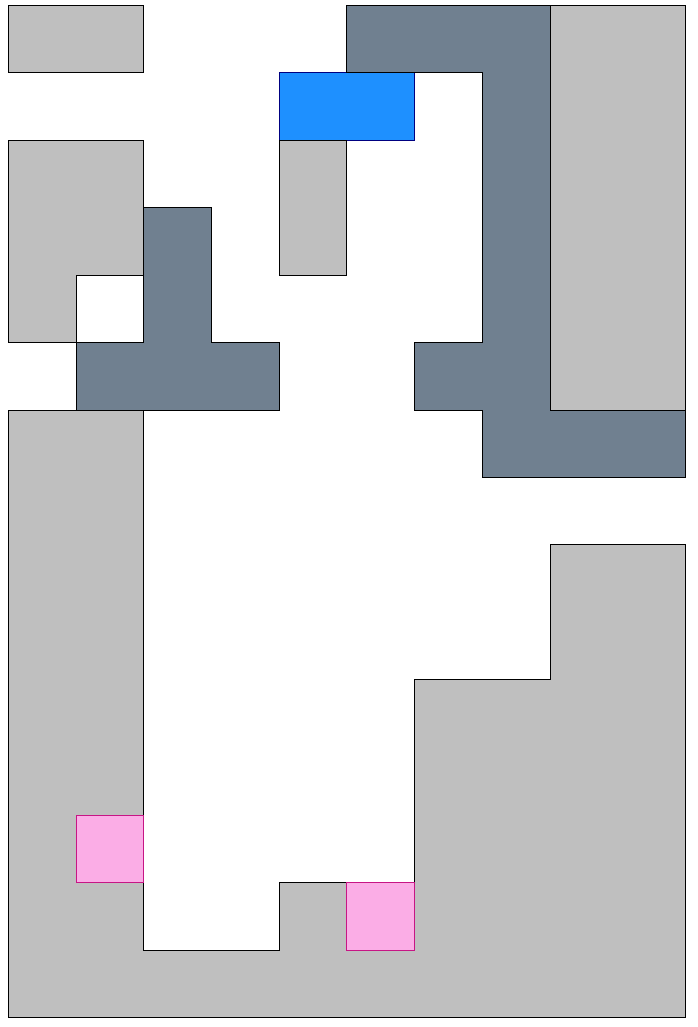}
                \caption{Blue jelly can access the bottom gap.}
                \label{fig:and_solved}
            \end{minipage}
        \end{figure}

        Finally, we construct the red (or blue) bend gadget, which allows the construction of all remaining configurations following results from \cite{HananoJ}. This gadget only has two tunnels on the same side, each with the same flow, i.e. 1 for red bend and 2 for blue bend. Besides, the red bend differs from other vertex gadgets in that solving it requires only one red jelly (and one blue jelly for a blue bend). We present the red bend gadget in Figure \ref{fig:red_bend} as well as its solution and its blocked state, when the minimum inflow requirement is not met. 

        \begin{figure}[H]
            \begin{subfigure}[t]{0.35\textwidth}
                \centering
                \includegraphics[width=0.5\textwidth]{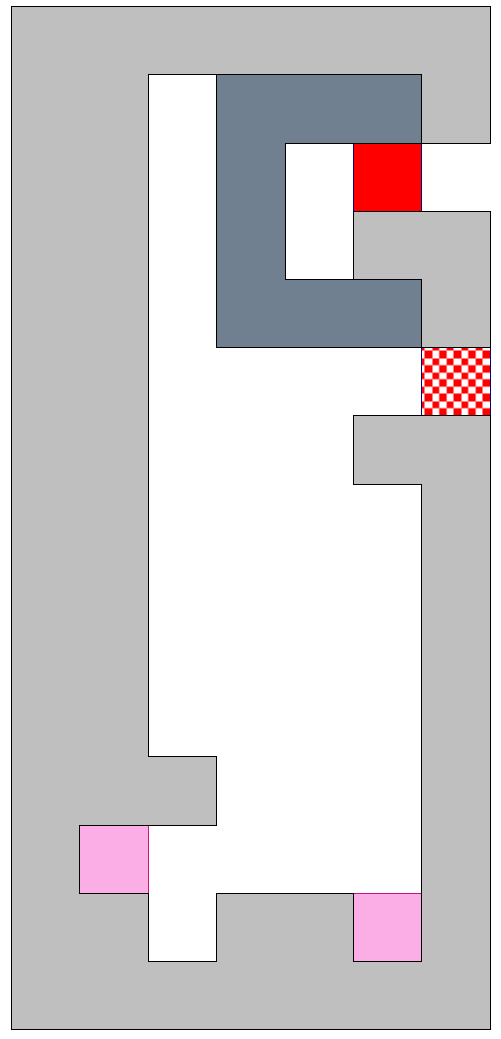}
                \subcaption{Red jelly supporting black jelly.}
            \end{subfigure}
            \begin{subfigure}{0.03\textwidth}
                \;\;
            \end{subfigure}
            \begin{subfigure}[t]{0.27\textwidth}
                \centering
                \includegraphics[width=0.65\textwidth]{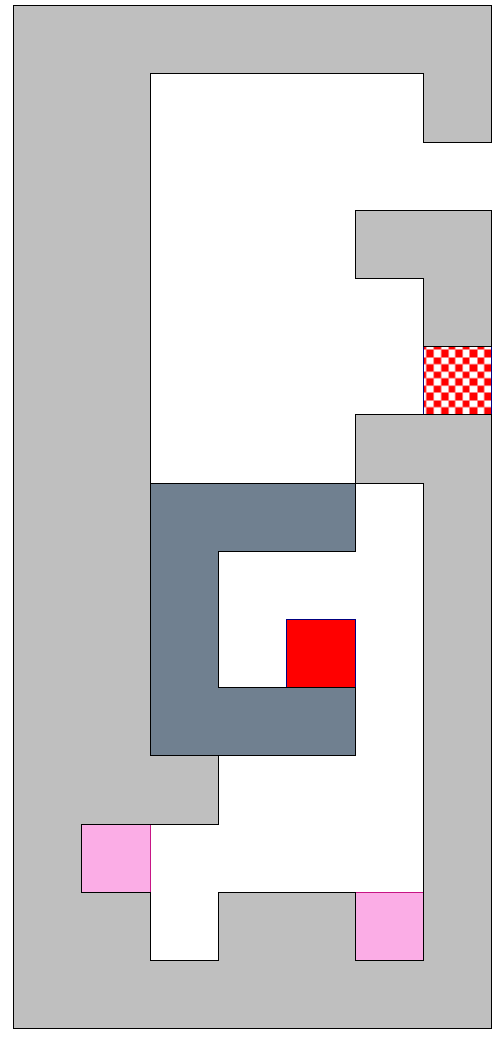}
                \subcaption{Solving gadget.}
            \end{subfigure}
            \begin{subfigure}{0.03\textwidth}
                \;\;
            \end{subfigure}
            \begin{subfigure}[t]{0.27\textwidth}
                \centering
                \includegraphics[width=0.65\textwidth]{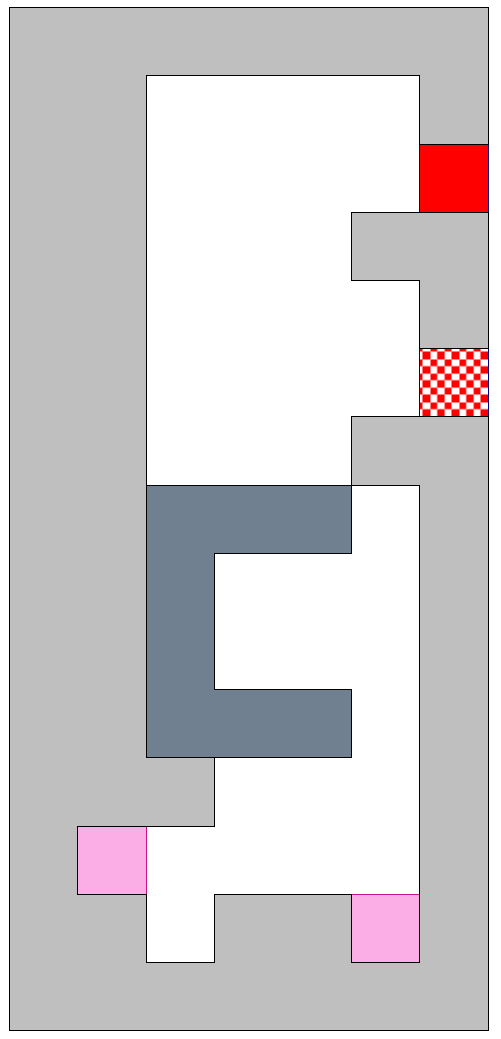}
                \subcaption{Blocked gadget.}
            \end{subfigure}
            \caption{Red bend gadget.}
            \label{fig:red_bend}
        \end{figure}

        This gadget is equivalent to reorienting an edge, from left to right or from middle to top/bottom.
        Whenever a configuration is obtained by combining several, we may need to move several jellies between combined gadgets in order to preserve flow, but once again the number of extra moves is constant for each configuration. For example, in Figure \ref{fig:example_bend_flow}, we combine $(\overleftarrow{B}, \overleftarrow{R}, \overrightarrow{R})$  and a blue bend to obtain $(\overrightarrow{B}, \overleftarrow{R}, \overrightarrow{R})$. To move the blue jelly outside of the gadget, we move it to the blue bend. This allows the blue jelly from the blue bend to leave while respecting the minimum inflow requirement.

        \begin{figure}[H]
            \begin{subfigure}{0.49\textwidth}
                \centering
                \includegraphics[width=0.9\textwidth]{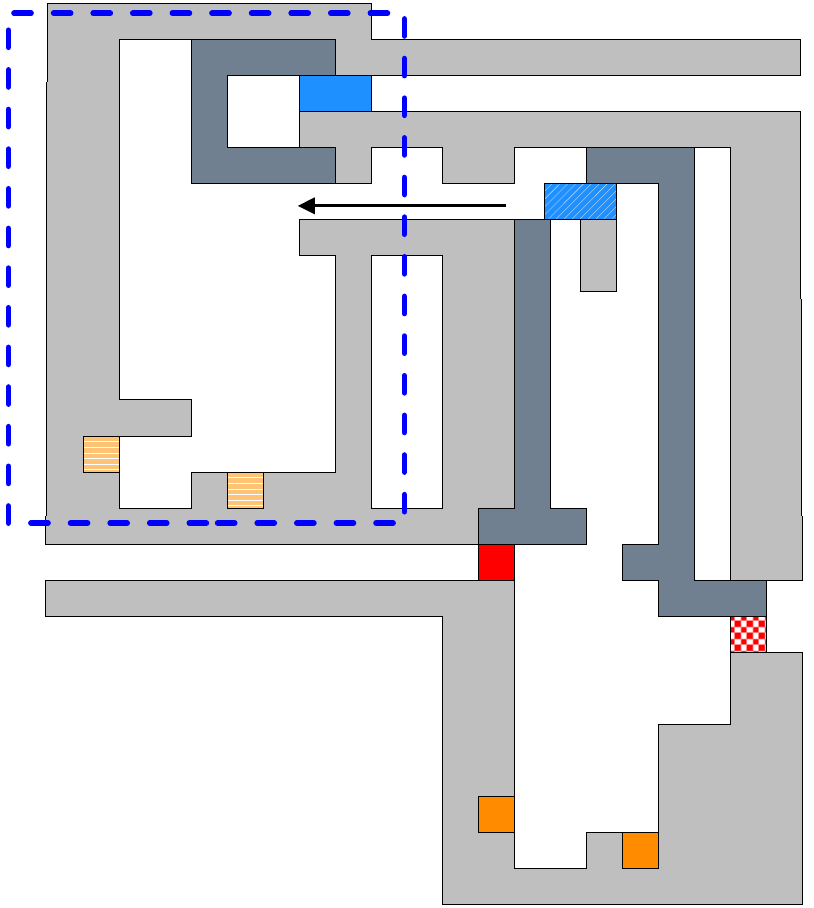}
                \subcaption{First jelly moves from $(\overleftarrow{B}, \overleftarrow{R}, \overrightarrow{R})$ to blue bend.}
            \end{subfigure}
            \begin{subfigure}{0.02\textwidth}
                \;\;
            \end{subfigure}
            \begin{subfigure}{0.49\textwidth}
                \centering
                \includegraphics[width=0.9\textwidth]{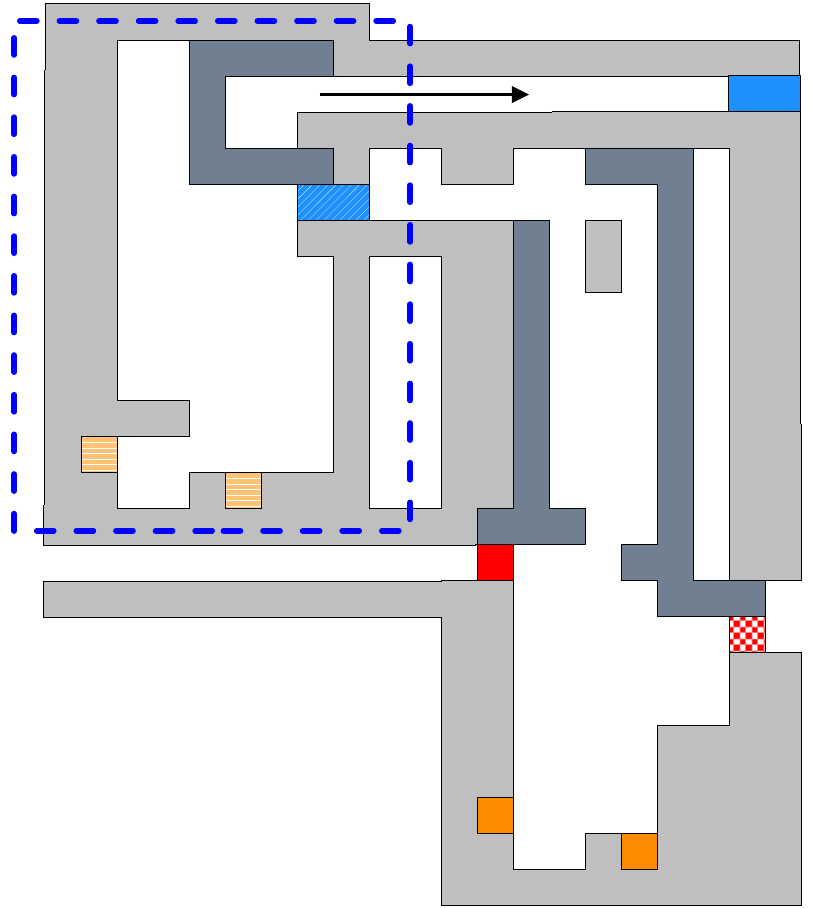}
                \subcaption{Second jelly leaves blue bend.}
            \end{subfigure}
            \caption{Flow reorientation in $(\overrightarrow{B}, \overleftarrow{R}, \overrightarrow{R})$ built from $(\overleftarrow{B}, \overleftarrow{R}, \overrightarrow{R})$  and a blue bend.}
            \label{fig:example_bend_flow}
        \end{figure}

        Due to gravity and the positions of the black jellies, no jelly can move into a tunnel it was not assigned to (i.e. a blue jelly in a red tunnel or the other way round, or two jellies in the same tunnel). Besides, we can easily check the other constraints for all gadgets by checking that both resolutions are possible (one blue jelly or two red jellies) and that if the constraints are not respected, the gadget cannot be solved, either because black jellies prevent edge jellies from entering the vertex gadget or because edge jellies fall to the bottom and prevent the vertex jellies from merging.
                  
         It is easy to see that if our instance of restricted NCL has a solution, i.e. if given an edge $e \in E$ there exists a succession of edge flips such that the direction of $e$ can be flipped while always respecting the constraints, then the level of \textsc{Jelly} resulting from our reduction also has a solution. For each edge flip $(v_i, u_i)$ to $(u_i, v_i)$ in the solution, the corresponding edge jelly, initially in the vertex gadget corresponding to $u_i$, moves to the one corresponding to $v_i$. If at every step of this succession of flips the minimum inflow requirement is still met, then the construction of each gadget stays in place. Eventually, one can solve all vertex gadgets by filling the gaps with either one blue or two red jellies and merge the jelly corresponding to $e$ with the corresponding jelly in vertex gadget corresponding to $u$. 

         Now, we must ensure that if the level of \textsc{Jelly} constructed by our reduction is solvable, then the corresponding instance of NCL has a solution.
         Due to the construction of the gadgets, the only way for the game to be solved is if (1) the gap in every gadget is filled by either one blue jelly or two red jellies and (2) the jelly corresponding to $e$ can merge with the jelly in the gadget corresponding to $u$. 
         This corresponds exactly to
         (1) meeting the minimum inflow requirement constraint in every vertex gadget and at every step of the game (if there exists a move and a vertex gadget for which the requirement is violated then the construction in this gadget falls apart and the black jellies fall down, thus the edge jellies cannot reach the gap and allow vertex jellies to merge, making the level impossible to finish) and
         (2) flipping the direction of edge $e$, which was the goal to reach.
         
        Therefore the original problem has a solution, which can be determined by following the succession of moves leading to a winning configuration of the level.
        
        This concludes our proof of Theorem \ref{JellyPHard}.
        \end{proof}

        Note that if we use black jellies for all the edges other than $e$, then we only need |V| + 1 colours for our reduction: one for each vertex of $G$ and one for $e$. If we forbid the use of black jellies, then since vertex gadgets use three extra jellies at most, we need to add at most $3 \cdot |V|$ colours for those and one colour for each edge so we need at most $4 \cdot |V| + |E|$ colours for the entire reduction.
        With some modifications in the gadgets, we can strengthen this result by reducing the number of colours for jellies other than black jellies to only one. 
       
    \subsection{JELLY with one colour and black jellies}

        In this section, we slightly modify the previous proof so as to prove the following result:

        \begin{theorem}
            \label{JellyPHard2}
            1-Colour \textsc{Jelly} is PSPACE-hard when we allow the use of black jellies.
        \end{theorem}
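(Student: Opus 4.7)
The plan is to adapt the reduction used for Theorem~\ref{JellyPHard} so that every coloured (non-black) jelly in the construction uses the same single colour. The starting observation is that edge jellies never truly need to merge: they merely encode which incident vertex gadget they currently occupy, i.e.\ the direction of the corresponding edge. Hence all edge jellies can safely be turned into black jellies. The internal support jellies were already noted to admit a black-jelly implementation at the end of the proof of Theorem~\ref{JellyPHard}. What remains coloured are the two vertex jellies sitting in the gap at the bottom of each vertex gadget, and these now all share one common colour.

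The difficulty is that the winning condition of Jelly-No demands that \emph{all} coloured jellies end up merged into one single block, whereas in the previous reduction each vertex gadget merged its pair independently. To accommodate this, I would add a horizontal collection corridor of fixed platforms running beneath the entire visibility representation, together with a narrow exit slot below the gap of each vertex gadget. The slot is dimensioned so that only a block of the width obtained by merging the two vertex jellies can drop through; a single unmerged vertex jelly is too small, or too far from the correct column, to escape on its own. Once every vertex gadget has been locally solved (which, as in the previous proof, still requires the NCL minimum-inflow constraint to hold), the merged pairs fall into the corridor one after another, combine into a single coloured block, and the level is won. For the distinguished edge~$e$, I would keep a fixed forcing jelly of the common colour in the ground of gadget~$u$ and promote the edge jelly for~$e$ itself from black to the common colour; all other edge jellies remain black. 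This reintroduces the merging constraint that forces~$e$'s jelly into gadget~$u$ exactly as before, because the final single block cannot be formed unless~$e$'s jelly meets the forcing jelly in~$u$.

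The main obstacle will be verifying that the collection corridor and exit slots do not create any unintended solution. Specifically, one must rule out that a lone vertex jelly, or a merged pair coming from a neighbouring gadget, could slip along the corridor back up into a different gadget and disturb its support structure of black jellies, or that a vertex jelly could sneak down into the corridor without its partner first being merged via the NCL-style local gap filling. This calls for a careful redesign of the bottom rows of each of the sixteen vertex gadgets from the previous proof, together with a case analysis on the widths and positions of all movable blocks, showing that the local constraint is still both necessary and sufficient for each gadget to contribute its merged pair to the corridor. Once this is established, the forward direction (NCL solvable implies Jelly solvable) is obtained exactly as in Theorem~\ref{JellyPHard} by simulating the flip sequence one edge at a time and then finally draining all merged pairs into the corridor.
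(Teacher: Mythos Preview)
Your high-level plan coincides with the paper's: make all edge and support jellies black, give every vertex jelly (and the jelly for $e$) the one common colour, and route them all down to a global solving zone beneath the visibility representation. Where you diverge is in the local exit mechanism, and as stated yours does not work. You write that the exit slot is ``dimensioned so that only a block of the width obtained by merging the two vertex jellies can drop through''; but in Jelly-No a wider block is \emph{harder} to fit through a slot, not easier, so there is no width that admits the merged pair while excluding each half. The alternative reading---that a single jelly is ``too far from the correct column''---fares no better, because in the original gadgets one of the two vertex jellies is fixed inside the wall: once the movable jelly reaches and merges with it, the resulting block is anchored and cannot drop anywhere. The paper sidesteps both issues by abandoning the pair altogether: it deletes the fixed vertex jelly and replaces the movable one by a single jelly of height~2 and width~1. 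The exit hole has width~1, so the (now doubled-width) black edge jellies cannot fall through, while the tall vertex jelly can; conversely, height~2 prevents the vertex jelly from straying into the height-1 edge tunnels. The local ``solving'' condition becomes that this single jelly must cross the gap to reach the hole, and the gap is reshaped so that a lone width-2 edge jelly cannot serve as a moving platform across it.

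You also leave unaddressed a second obstruction that the paper treats explicitly: a vertex jelly dropping from a high gadget must traverse every horizontal edge tunnel lying beneath it on its way to the corridor, without being able to enter those tunnels or be blocked by edge jellies sitting in them. The paper handles this with a dedicated crossing mechanism (its Figure~\ref{schema}) and by choosing the visibility representation so that the target vertex $u$ is leftmost, which lets the coloured edge jelly for $e$ reach the global solving zone through a side tunnel without crossing anything. Your proposal would need analogous devices before the corridor idea can be made sound.
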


        This result highlights the key role of black jellies in the complexity of 1-Colour \textsc{Jelly} since, in contrast, this very game without black jellies is known to be in NP \cite{JellyNo}.

        \begin{proof}
        We use the same reduction as for the proof of Theorem \ref{JellyPHard} (unbounded number of colours) but reduce the number of colours used to one. 

        Let $e = (u, v)$ be the edge for which we want to change the orientation, say from $(u, v)$ to $(v, u)$: that is to say, we want the level to be solvable even when the jelly representing $e$ is in the vertex gadget representing $u$ and cannot leave to go back to $v$. We give the same colour, say pink, to all the vertex jellies as well as to the jelly corresponding to $e$. We replace all the other jellies with black jellies.

        We start by computing a visibility representation of $G$ such that $u$ is the leftmost vertex in the representation, which is possible for any vertex \cite{Tamassia1986}.
        We then add a solving zone at the bottom of the level and ensure all vertex jellies, as well as the jelly representing $e$, can access this solving zone and merge. This requires a few modifications in our gadgets, which are as follows: 
        \begin{itemize}
        \item first, we add a hole of length 1 at the bottom of each vertex, leading to the solving zone at the bottom of the level. We shall call this hole a ``pit'': once the jelly enters the pit, it arrives at the solving zone and can't go back.
         \item we multiply the width of the edge jellies by two without changing their height; this certifies that these jellies will not fall in the pit, which has only length 1, while using the tunnels. 
         \item we remove the vertex jelly fixed inside the walls and replace the other vertex jelly by one of height 2 and width 1. Thus, this jelly can leave the vertex gadget once it arrives at the pit, but it cannot use the tunnels of height 1 which are meant to be used only by the edge jellies. 
    	\item last, we make an important change at the bottom of the vertex: completing the solving gap now requires two jellies of width 2 (representing red edges) or one of width 4 (representing blue edges).  Simply modifying the width of the original solving gap for vertex or blue bend gadgets would allow using a single jelly of width 2 as a moving platform across the gap, breaking the minimum inflow requirement. The new shape of the gap prevents this.
    	\end{itemize}
        Figures \ref{fig:new_and} and \ref{fig:new_red_bend} show the modifications for a vertex gadget and for a red bend gadget.
        	
        \begin{figure}[H]
            \centering

            \begin{subfigure}[t]{0.63\textwidth}
                \centering
                \includegraphics[width=0.89\textwidth]{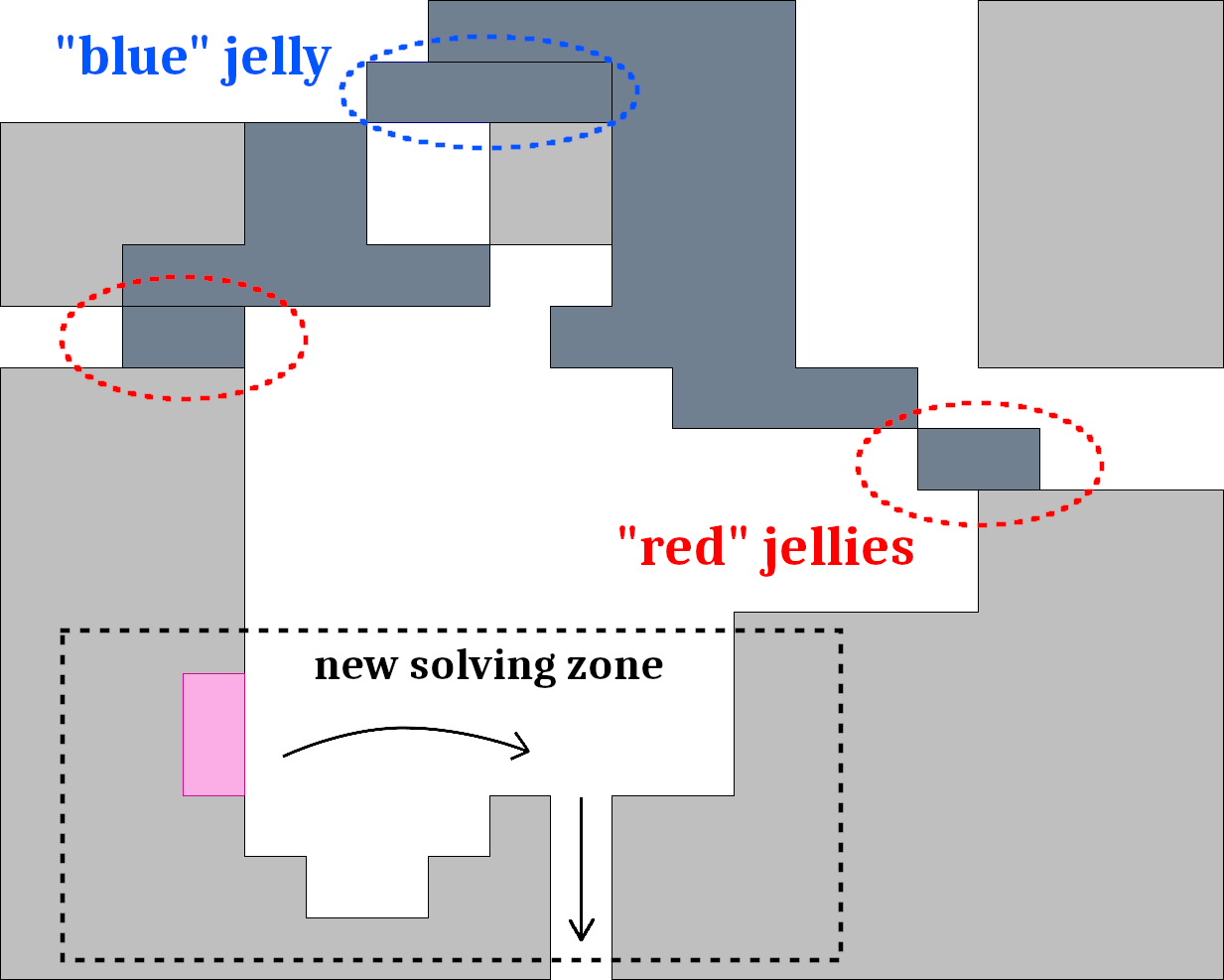}
                \subcaption[width=\textwidth]{New ($\overleftarrow{B}, \overleftarrow{R}, \overrightarrow{R}$) gadget}
                \label{fig:new_and}
            \end{subfigure}
            \begin{subfigure}[t]{0.01\textwidth}
                \;\;
            \end{subfigure}
            \begin{subfigure}[t]{0.34\textwidth}
                \centering
                \includegraphics[width=0.75\textwidth]{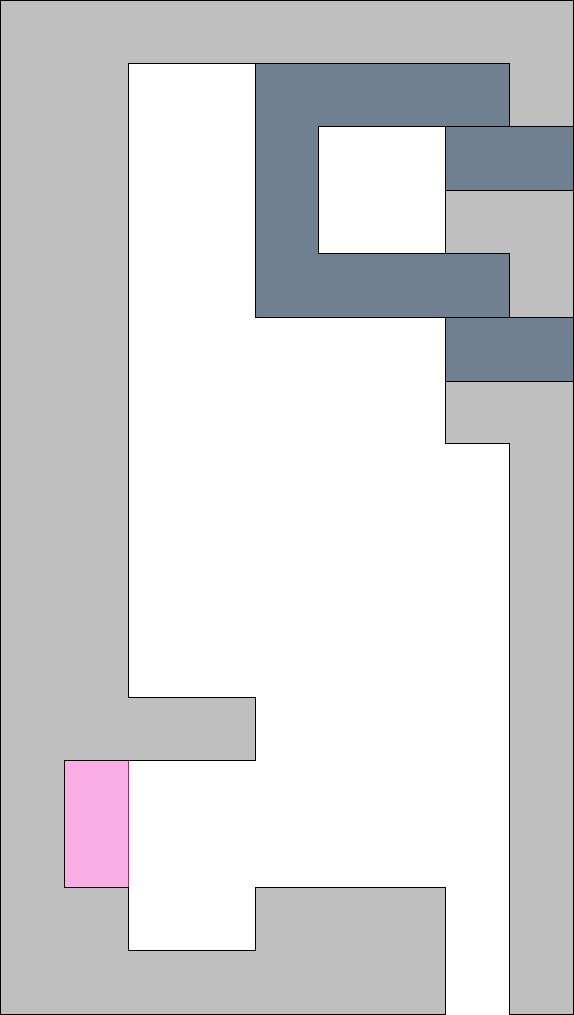}
                \subcaption[width=\textwidth]{New red bend gadget}
                \label{fig:new_red_bend}
            \end{subfigure}
            \caption{Modified gadgets. A gap, which ``blue" and ``red" jellies cannot fall through, leads vertex jellies to a global solving zone at the bottom of the level.}
        \end{figure}

        Now, for the original problem to have a solution, edge $e$ must be flipped from $(u, v)$ to $(v, u)$. This corresponds in our reduction to forcing the corresponding edge jelly, which is now pink, into the vertex gadget corresponding to target vertex $u$. We modify the solving zone in target gadget $u$ to add a tunnel of width 2 or 4 (depending on the weight of $e$) leading to the global solving zone. This tunnel is of height 1, thus the vertex jelly cannot go through it, and since $u$ is the leftmost part of the level there is no risk of crossing a tunnel or another vertex. 
        Figures \ref{fixed1} and \ref{fixed2} show this modification as well as solving mechanisms.
 
        \begin{figure}[H]
            \begin{subfigure}{0.49\textwidth}
                \centering
                \includegraphics[width=0.9\textwidth]{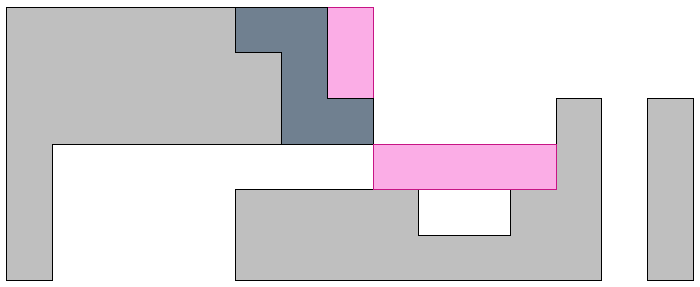}
                \subcaption[width=0.7\textwidth]{Letting the vertex jelly cross}
            \end{subfigure}
            \begin{subfigure}{0.49\textwidth}
                \centering
                \includegraphics[width=0.9\textwidth]{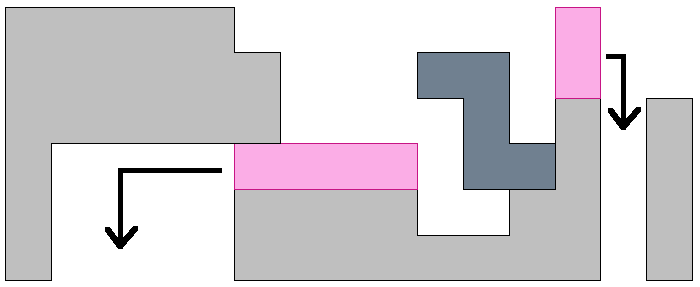}
                \subcaption[width=0.7\textwidth]{Jelly corresponding to $e$ can leave the vertex gadget}
            \end{subfigure}
            \caption{Modified solving zone in gadget for target vertex $u$ when $e$ is a blue edge}
        \label{fixed1}
        \end{figure}

        \begin{figure}[H]
            \begin{subfigure}{0.49\textwidth}
                \centering
                \includegraphics[width=0.9\textwidth]{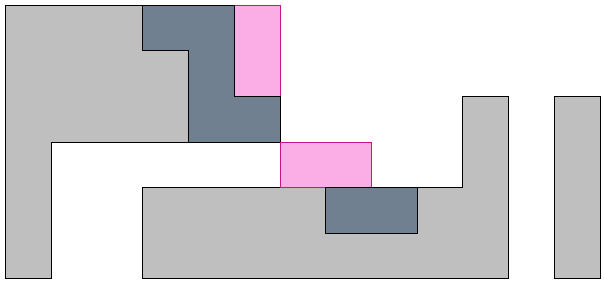}
                \subcaption[width=0.9\textwidth]{First solution}
            \end{subfigure}
            \begin{subfigure}{0.49\textwidth}
                \centering
                \includegraphics[width=0.9\textwidth]{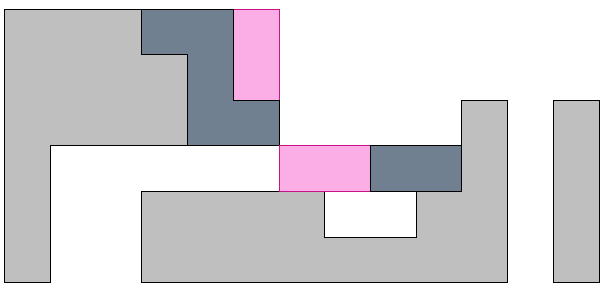}
                \subcaption[width=\textwidth]{Second solution}
            \end{subfigure}
            \caption{Modified solving zone in gadget for target vertex $u$ when $e$ is a red edge}
            \label{fixed2}
        \end{figure}
                       
        Similarly to our previous proof, we can easily see that the level constructed by our new reduction has a solution if, and only if, the original NCL problem has a solution.
        
        This concludes our proof of Theorem \ref{JellyPHard2}.
        \end{proof}

    \subsection{General JELLY is in PSPACE}\label{sec:inPspace}

    In this section, and for the sake of completeness, we prove the following result:
        
    \begin{theorem}
        \label{inPspace}
        General \textsc{Jelly} is in PSPACE.
    \end{theorem}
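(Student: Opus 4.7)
The plan is to follow the standard configuration-reachability paradigm for puzzle games: show that a single configuration of a Jelly-No level can be stored in polynomial space, that the one-move successor relation is polynomial-time (hence polynomial-space) decidable, and then invoke Savitch's theorem to collapse the nondeterministic space bound to a deterministic one.

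First I would define a configuration of the level formally. Since the board has dimension $n\times m$, I can describe a configuration by assigning to each cell either the label ``wall/platform'', ``empty'', or an identifier pointing to the jelly occupying that cell. There are at most $nm$ distinct jellies at any point in the game (merging only reduces this count), so each cell label fits in $O(\log(nm))$ bits and the whole configuration fits in $O(nm\log(nm))$ bits. I would also note that the colour of each jelly (including the ``black'' attribute) is part of the static input data, so the configuration only needs to record geometry. This gives the polynomial space bound on storing a single state.

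Next I would argue that the one-step transition is polynomial-time computable. A move consists of picking a movable jelly and a direction (left or right); there are $O(nm)$ such choices. Given a choice, checking that the translated jelly does not collide with a wall, fixed jelly or occupied cell and then applying gravity (iteratively letting every unsupported movable jelly drop one row, which terminates in at most $n$ rounds) and finally resolving colour-matching merges (a single pass over the cells) is all polynomial in $nm$. Thus, given two configurations $C$ and $C'$, deciding whether $C\vdash C'$ in one player move takes polynomial time, and deciding whether $C'$ is the stable configuration obtained after gravity and cascading merges is also polynomial. I would also specify the winning predicate, namely that for every colour (except black) all jellies of that colour have been merged into a single block, which is trivially polynomial to check.

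Finally I would present the NPSPACE algorithm and apply Savitch. The algorithm stores only the current configuration in polynomial space; at each step it nondeterministically guesses a legal move, overwrites the stored configuration with the resulting one, and accepts if and only if a winning configuration is reached. Termination is enforced by a binary counter of length $O(nm\log(nm))$ that upper-bounds the total number of distinct configurations; once the counter overflows, the branch rejects. This procedure runs in nondeterministic polynomial space, and Savitch's theorem gives $\mathrm{NPSPACE}=\mathrm{PSPACE}$, completing the proof. There is no serious obstacle here; the only point requiring a little care is the book-keeping for gravity and chained merges when showing the transition is deterministic and polynomial, but since the board is fixed in size these cascades terminate in a number of passes bounded by $nm$.
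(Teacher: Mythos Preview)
Your proposal is correct and follows essentially the same configuration-reachability route as the paper: encode a configuration in polynomial space, observe that the one-move successor relation is locally checkable, run a nondeterministic guess-and-check search bounded by the number of configurations, and apply Savitch's theorem to conclude membership in PSPACE. Your write-up is in fact more explicit than the paper's, which simply invokes the fact that reachability in the (exponential-size, implicitly represented) configuration graph is in NL relative to that graph and hence in NPSPACE.
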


    \begin{proof}
        
    Let $J$ be an instance of \textsc{Jelly} of size $n$. Clearly, any configuration of $J$ can be stored using polynomial space. Let $G = (V, E)$ be the configuration graph of $J$: we represent all possible configurations of $J$ in a directed graph such that $(u, v)$ is in $E$ if and only if one can go from $u$ to $v$ in one step, i.e. by moving one block.
    
    Now, solving $J$ is equivalent to finding a path in $G$ between the initial configuration and a solution. There is no need to explicitely store the whole configuration graph as we can examine it step by step. This corresponds to the \textsc{ST-connectivity} problem, which is known to be in NL in relation to the size of the graph. 
    
    Since $G$ is exponential in $n$, \textsc{Jelly} is in NPSPACE. 
    Besides, by Savitch's theorem \cite{Savitch} NPSPACE equals PSPACE. Therefore \textsc{Jelly} is in PSPACE.
    
    This concludes our proof of Theorem \ref{inPspace}. 
    \end{proof}

\section{JELLY is NP-hard even with one bounded board dimension}\label{restrictedJelly}

       In this section, we prove that \textsc{Jelly} remains NP-hard even when we restrict one of the two dimensions of the board to be a small constant and even for jellies of one colour and no black jellies. We prove NP-hardness of 1-Colour \textsc{Jelly} with a bounded board height of nine in Section \ref{sec:boundedheight}, and with a bounded board length in Section \ref{par:jellyLength}.
       Membership in NP was proven for 1-Colour \textsc{Jelly}, even when both board dimensions are unbounded \cite{JellyNo}. Along with our new results, we obtain NP-completeness for both those problems.
	
	All of our reductions are from 3-\textsc{Partition}, which is the following problem:
	
		\textsc{Instance:} Let $m$ and $B$ be two integers. Let $S = \{a_1$,..., $a_{3\cdot{m}}\}$ be a set of $3\cdot{m}$ integers such that $\displaystyle\sum_{i=1}^{3\cdot{m}}a_i$ = $m\cdot B$, with $\forall i, \frac{B}{4} < a_i < \frac{B}{2}$. 
		
		\textsc{Question:} Can we divide $S$ in $m$ triplets $S_i = \{a_{i1}, a_{i2}, a_{i3}\}$ such that for all i, $a_{i1} + a_{i2} + a_{i3} = B$ ?  
		
		The basis of our reductions follows the original reduction in \cite{JellyNo}: there are $m$ horizontal or vertical gaps of length equal to the target $B$ and jellies whose length or height corresponds to the items of the 3-\textsc{Partition} instance; we need to fill the gaps with triplets of jellies merged together (horizontally or vertically).
		
		However, gravity in Jelly doesn't let us treat the case of bounded height and bounded length similarly: moves in the vertical direction are irreversible, whereas they are reversible in the horizontal direction. 
	Therefore, key points in the reductions differ: in the case of bounded length and unbounded height, we can vertically stack all the jellies one on top of the other (each one on its own platform) and select the order in which we drop them. Besides, any jelly standing alone on a platform can be dropped either from the left side or from the right. This lets us form (vertically) any triplet we want and also take the newly-formed jelly to any depth we like.
	In the case of bounded height however, where we stack jellies horizontally one next to the other, because of gravity, only the left-most jelly or the right-most one can fall (we can no longer select the order to drop the jellies and form the triplets). To overcome this difficulty, the original proof of \cite{JellyNo} used an unbounded height that depended linearly on $m$ (the number of gaps to be filled) and fixed platforms that lead to each gap. Thus, one could direct the next jelly in line in the correct gap following the correct vertical path of platforms. With a slightly cleverer arrangement of the platforms, one could manage to achieve the same thing in boards of height $\log m$. However, in order to achieve hardness for the case of constant height, this idea alone does not suffice. 
	
	In this case, our proof is based on the following key observation: when a jelly comes across a gap which is at least as large, it can either fall in or merge with some other jelly and pass over the gap. This property allows us to select the jellies to put in each gap (and thus in each triplet): we use holes of increasing length and a number of ``padding'' jellies which, when merged together, allow the newly-formed jelly to pass over some of the first smaller holes (the larger the jelly the further it can travel horizontally). The formation of triplets is thus possible by merging the appropriate number of paddings to a jelly so that it can pass over all the first holes and arrive to the one that corresponds to the selection. 

	In Section \ref{sec:2col}, we additionally show that 2-Colour \textsc{Jelly} is NP-hard for boards of a constant height of only 4, by using the property of jellies of different colour to move one on top of the other. However, let us note here that, unlike 1-Colour \textsc{Jelly}, it is not known whether this problem is in NP.

    \subsection{1-Colour JELLY is NP-complete even with a restricted number of lines}\label{sec:boundedheight}
    
    	We prove the following result:

        \begin{theorem}
            \label{linesJell}
            \textsc{Jelly} is NP-hard even with only one colour and nine lines.
        \end{theorem}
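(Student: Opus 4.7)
The plan is to reduce from 3-\textsc{Partition}. Given an instance $S = \{a_1, \ldots, a_{3m}\}$ with $\sum_i a_i = mB$ and $B/4 < a_i < B/2$, I would construct a Jelly-No board of constant height (targeting $10$ rows) whose solvability is equivalent to the existence of a 3-partition. The board has two horizontally separated regions: a \emph{dispenser/merging region} on the left and a \emph{bucket region} on the right. The bucket region occupies the bottom row and contains $m$ gaps of slightly increasing length $B, B+c, B+2c, \ldots, B+(m-1)c$ for some small integer $c$ (with $c$ chosen so that an item jelly of width $a_i$ alone still falls into the smallest gap, and the combinatorics of the 3-partition condition are not disturbed). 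The dispenser region holds the $3m$ item jellies of widths $a_1, \ldots, a_{3m}$ together with a carefully chosen multiset of padding jellies, all placed on platforms and accessible one at a time.

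The key idea, stated in the paragraph preceding the theorem, is that when a horizontally-moving jelly reaches a gap, it falls in if its width is at most that of the gap and otherwise slides across. Hence by merging an item jelly with a controlled amount of padding before launching it across the bucket row, the player can deliver it to exactly one chosen bucket: the assembled width must exceed $B+(j-1)c$ for every $j<k$ and be at most $B+(k-1)c$ for the target bucket $k$. If $\{S_1, \ldots, S_m\}$ is a valid 3-partition, the player routes the three items of $S_k$ into bucket $k$ together with exactly $(k-1)c$ worth of padding, so the three items and the padding plug bucket $k$ with total width $B+(k-1)c$ and merge into a single jelly. Conversely, a winning play must merge all item jellies of each colour into one, which forces every bucket to be filled, and the standard 3-\textsc{Partition} arithmetic (using $B/4 < a_i < B/2$) then forces exactly three items per bucket summing to $B$.

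For the construction to be faithful, I would need to guarantee four things inside the constant-height board: (i) the player can release items in any desired order, (ii) the player can merge an item with any chosen submultiset of the paddings before launching it, (iii) after launching, the jelly travels horizontally across the bucket row without any further interaction that would change its width, and (iv) the padding supply is globally consistent so that the total padding used across all $m$ drops is exactly $\sum_{k=1}^{m}(k-1)c$ and is distributed correctly. Each of these calls for a small gadget: a staircase of platforms to hold items in a queue and release them one at a time, a horizontal merging shelf on which item and padding jellies can be pushed together before falling, and clearance lines over the bucket row that the assembled jelly can traverse unobstructed.

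The main obstacle is squeezing every gadget into exactly ten rows while keeping the merging selection free and reversible. Because vertical motion in Jelly is irreversible, the dispenser and merging shelves must give the player full horizontal choice of which paddings to attach to which item without prematurely committing any jelly to the bucket row. I expect the delicate verification to be a case analysis showing that, in the height-$10$ layout, the only productive sequence of moves is the one that implements the intended ``pick items, pick paddings, merge, launch'' protocol, and that any deviation either leaves a jelly stranded (preventing the global merge) or wastes padding (preventing some later bucket from being filled). Once this structural lemma is in place, the equivalence with 3-\textsc{Partition} follows directly, and combined with membership of 1-Colour \textsc{Jelly} in NP from \cite{JellyNo} this yields NP-completeness at height $10$.
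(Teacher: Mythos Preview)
Your high-level plan is the same as the paper's: reduce from \textsc{3-Partition}, represent items as horizontal jellies, and use \emph{padding} jellies so that a padded item skips over short gaps and drops into the intended one. That core idea is exactly right.

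However, the numerics in your sketch are internally inconsistent, and this is a genuine gap rather than a cosmetic one. You say each item is launched individually with enough padding that its assembled width lies in $(B+(k-2)c,\,B+(k-1)c]$; since every $a_i<B/2$, that forces each of the three items aimed at bucket $k$ to carry padding of size more than $B/2+(k-2)c$. Three such padded items then have combined width exceeding $3B/2+3(k-2)c+B=5B/2+3(k-2)c$, which for $k\ge 2$ is strictly larger than your bucket width $B+(k-1)c$. So either the three items cannot fit side by side in the bucket, or you silently switched to ``merge all three items first and launch once'', in which case your earlier per-item routing condition no longer applies. Either reading breaks the reduction as stated, and ``small $c$'' cannot save it.

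The paper resolves exactly this tension by \emph{decoupling} the routing gap from the filling gap. There are holes of width $2jB+\tfrac{B}{2}$ at one height (these select the destination: an item merged with $j$ paddings of width $2B$ has width $2jB+a_i\in(2jB+\tfrac{B}{4},\,2jB+\tfrac{B}{2})$, so it clears holes $0,\dots,j-1$ and drops through hole $j$), and below each hole a separate platform and a gap of width $(6j+1)B$ between two fixed jellies (these are what the three padded items, of total width $6jB+B$, must fill). You will also need the small left/right drop gadget the paper uses so that the three items can be placed inside the gap without prematurely merging with a ground jelly.

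Two smaller remarks. First, you do not need a ``release in any order'' dispenser: the items can sit on one continuous shelf and come off in a fixed order, because the player chooses the destination via the amount of padding, not via the order of release. Second, once the two-level hole/gap structure is in place, your point~(iv) about global padding bookkeeping becomes automatic: there are exactly $3j$ paddings provided for triplet $j$, and any misallocation leaves some hole unreachable or some gap underfilled.
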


        \begin{proof}

		We reduce \textsc{3-Partition} to 1-Colour \textsc{Jelly}. 
		
		Let all the blocks of jelly be pink.
		In order to reduce \textsc{3-Partition} to 1-Colour \textsc{Jelly}, we must find a way to represent the integers and the triplets to be formed, with the idea that solving the part of the puzzle corresponding to some triplet should be equivalent to constructing a triplet in the original instance.

        For the following construction, we refer the reader to Figure \ref{1colbh}, which illustrates the reduction on a simplified instance with only two triplets.

        \begin{figure}[H]
            \centering
            \includegraphics[width = \textwidth]{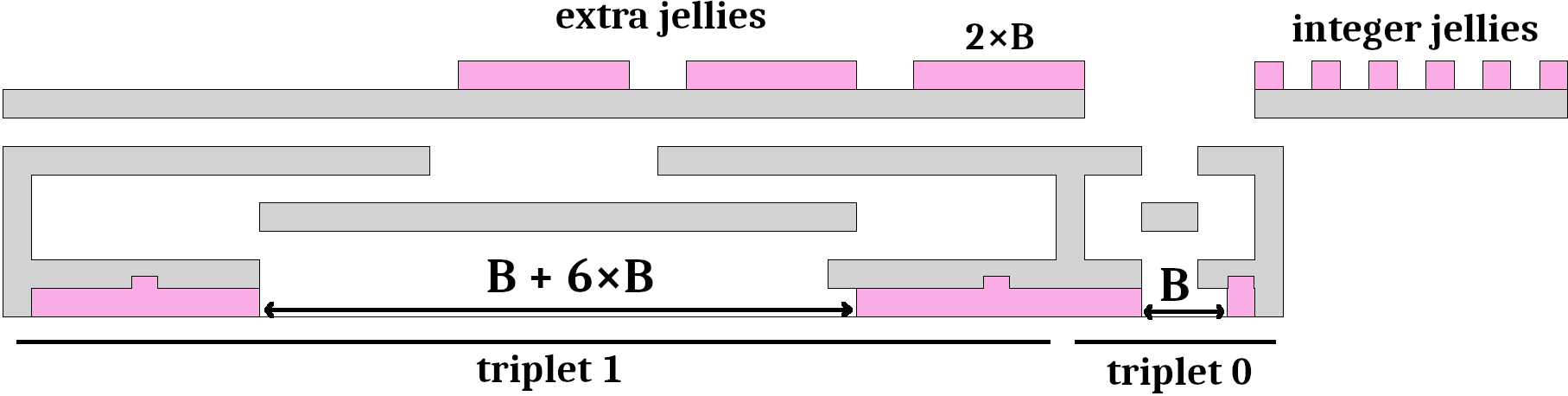}
            \caption{Example with $m = 2$ triplets and $B = 3$.}
            \label{1colbh}
        \end{figure}
        
        We represent integers $a_i$ as horizontal jellies of height 1 and width $a_i$ stored at the top right of the level on some continuous platform and order the triplets to be formed from $0$ to $m-1$. For each triplet $j > 0$, on the top left of the level, we add $3 \cdot j$ jellies of height 1 and width $2 \cdot B$. The idea of the reduction is that in order to access triplet $j$, we would need to use $3\cdot j$ of these additional jellies merged to the original jellies of length $a_i$. To achieve this goal, we introduce holes ever-increasing in length: to access the triplet $j$ we use a hole of length $2 \cdot j \cdot B + \frac{B}{2}$. Thus, a jelly of length $2\cdot j \cdot B + a_i$ cannot access the holes $0, \ldots, j-1$ but will fall through the $j^{th}$ hole.
        
Each triplet corresponds to two jellies stuck to the ground and separated by a horizontal gap. As with the lengths of the holes, these horizontal gaps are linearly increasing in length, starting from width $B$ and adding $6\cdot B$ to each gap from right to left: for triplet number $j$, this gap will be $(6 \cdot j + 1) \cdot B$. This triplet is accessible by a platform of same size $(6 \cdot j + 1) \cdot B$, allowing to choose whether the played jelly should fall from the left or the right. 

Since movable jellies falling to the ground directly attach to the jellies which are fixed on the ground, we introduce a gadget to make sure that the middle movable jelly can properly fill the gap, without getting stuck on top of the left or right movable ones. The left and right jellies stuck on the ground are covered by platforms able to support the largest movable jellies: the left one by a platform of width $2 \cdot j \cdot B + \frac{B}{2}$ and the right one by a platform of width $2 \cdot j \cdot B + \frac{B}{2} + 1$ to avoid falling jellies to merge immediately with it. The intention is that two jellies out of three should fall from the right, starting by the middle one and followed by the rightest one. The only jelly falling from the left side should be the last one used as it immediately merges and is therefore not allowed to move inside the gap. 

        Here, the hole for triplet 0 is of width $\frac{B}{2}$ and the one for triplet 1 of width $2 \cdot B + \frac{B}{2}$ and the gap for triplet 0 is $B$ whereas the one for triplet 1 is $6\cdot B +B$. For a level with three triplets, triplet 2 would have a gap of length $B + 12 \cdot B$ with a hole of width $4 \cdot B + \frac{B}{2}$, and on the top left we would need to add six extra jellies of length $2\cdot B$, two to allow each integer jelly to cross the hole of triplet 1.
        
        To place a jelly corresponding to some integer $a_i$, the player chooses which triplet $j$ it will be placed in and drops the corresponding extra jelly (except for triplet 0 whose width is exactly $B$), as visible on Figure \ref{bh1}. Then, the player horizontally merges the two jellies, allowing them to cross all the gaps between triplet 0 and triplet $j-1$. The jelly falls in the gap $j$, as visible on Figure \ref{bh2}, and can be placed accordingly.
        This mechanism allows the player to choose where to play each integer jelly and to complete the level, as visible on Figure \ref{bh3}, where each triplet contains three jellies from the original integer set.

        \begin{figure}[H]
            \centering
            \includegraphics[width = \textwidth]{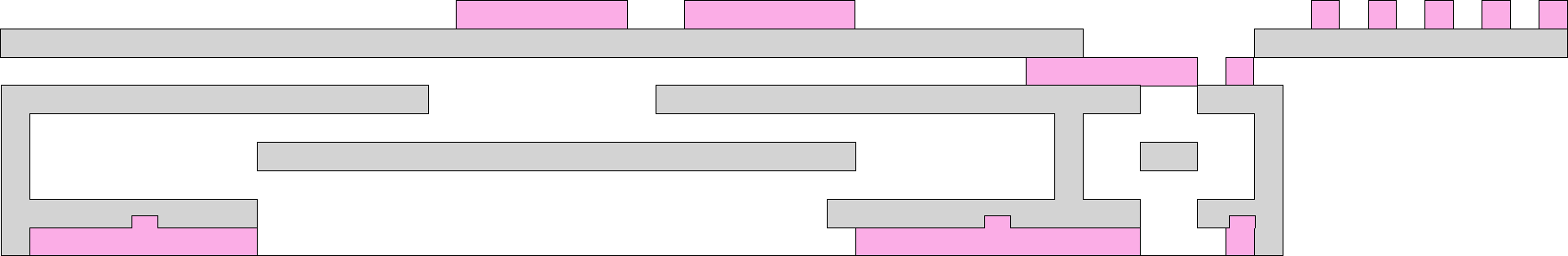}
            \caption{Choosing one extra jelly, corresponding to triplet 1.}
            \label{bh1}
        \end{figure}

        \begin{figure}[H]
            \centering
            \includegraphics[width = \textwidth]{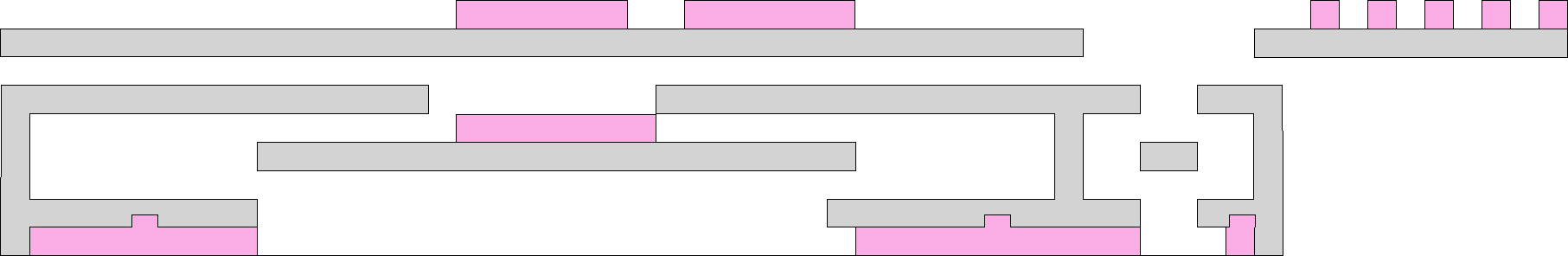}
            \caption{The new jelly can now pass the gap leading to triplet 0.}
            \label{bh2}
        \end{figure}

        \begin{figure}[H]
            \centering
            \includegraphics[width = \textwidth]{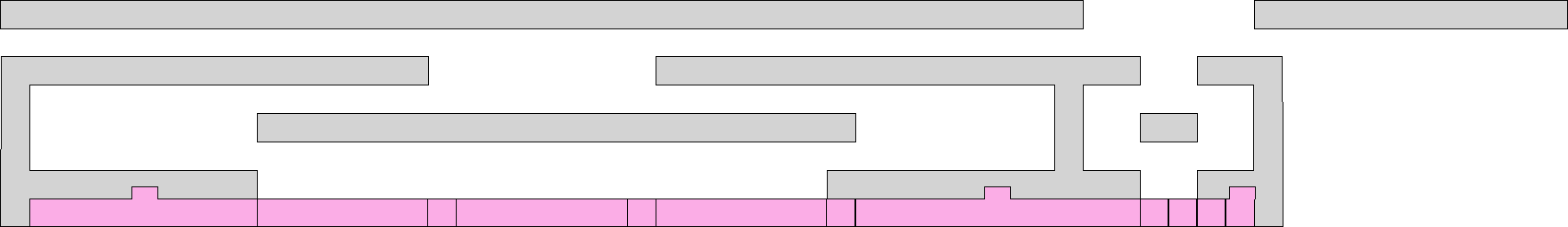}
            \caption{Final configuration, with each triplet gadget filled with exactly three integer jellies.}
            \label{bh3}
        \end{figure}
        
        It is easy to see that given a correct instance of \textsc{3-Partition}, the level constructed by our reduction is solvable, by filling each of the triplet gadgets as such: picking three jellies corresponding to the three integers of $S_i$, merging each one with the appropriate number of extra jellies to make them access the solving zone of the gadget and filling the gap.

		Now we must ensure that if the level of \textsc{Jelly} constructed by our reduction is solvable, then the corresponding instance of \textsc{3-Partition} has a solution. 
		Given any instance of \textsc{3-Partition}, we know that in the corresponding level created by our reduction and for every integer jelly $A_i$, $\frac{B}{4} < width(A_i) < \frac{B}{2}$. Besides, the total width of the integer jellies is exactly $m \cdot B$. This implies that the only way to reach a total length of $B$ is to sum the lengths of exactly 3 jellies.
		
		Since the width of these extra jellies is an even multiple of B, the triplet gadgets cannot be filled with these alone. Using exactly three integer jellies is required to form the remaining width $B$. 
		Now, the only way to solve the level is to connect all the pink jellies by filling each of the $m$ gadgets with three jellies whose total width is exactly $B$ and some extra jellies. These jellies must all be of height exactly 1: if a jelly of height bigger than 1 is constructed, this means that two of the movable jellies merged vertically instead of horizontally and either the required width cannot be reached or another jelly needs to be used in the gadget, meaning that there will be a jelly missing somewhere else in the level. In total, the integer jellies are grouped in $m$ triplets each of equal sum $B$. This corresponds exactly to a solution of \textsc{3-Partition}. 

        This concludes our proof of Theorem \ref{linesJell}.
        \end{proof}

        Note that the reduction is quadratic: for each triplet, we must widen the gap by $6 \cdot B$ blocks and add the two spaces on each side of the gap. In total, the width of the level is $\mathcal{O}(B \cdot m^2)$. This remains a polynomial reduction, proving Theorem \ref{linesJell}.
        However, the question remains open for levels of height smaller than nine. 

   \subsection{2-Colour JELLY is NP-hard even with a restricted number of lines}\label{sec:2col}

        In this section, we show NP-hardness of \textsc{Jelly} with boards of only three lines by allowing the use of a second colour. This also allows for a linear reduction, instead of quadratic for our previous result.
        
        Let us now consider a variant of \textsc{Jelly} where jellies can be of two possible colours (or one colour and black jellies), and the height of the board is limited to three lines. We prove the following result:

        \begin{theorem}
            \label{thm:2col}
            \textsc{Jelly} is NP-hard even with only two colours and three lines. Besides, we only need one jelly from the second colour.
        \end{theorem}

        \begin{proof}
        Once again, we start from an instance of \textsc{3-Partition} and our construction is very similar to the one in \cite{JellyNo}. However, adding one colour allows to easily restrict the height of the board, whereas the original construction required a height of $O(\log m)$ with a necessity to add horizontal platforms acting as an equivalent to our choice zone. Given two colours, say pink and blue, we start by representing the integers $a_i$ as pink jellies of height 1 and of width $a_i$. We store those jellies on a continuous platform at the very left of the level. A triplet gadget is composed of two pink jellies of size $1 \times 2$, stuck so that they cannot move and separated by a gap of width $B$, assuming $B$ is strictly greater than 0 (otherwise the problem is trivial). The bottom part of the level is composed of $m$ such gadgets. On top of those, a blue jelly of height 1 and width $B+1$ acts as a movable platform to allow the jellies representing the integers to fill the triplet gadgets. See Figure \ref{twocoljel}.

        \begin{figure}[H]
            \centering
            \includegraphics[width=0.9\textwidth]{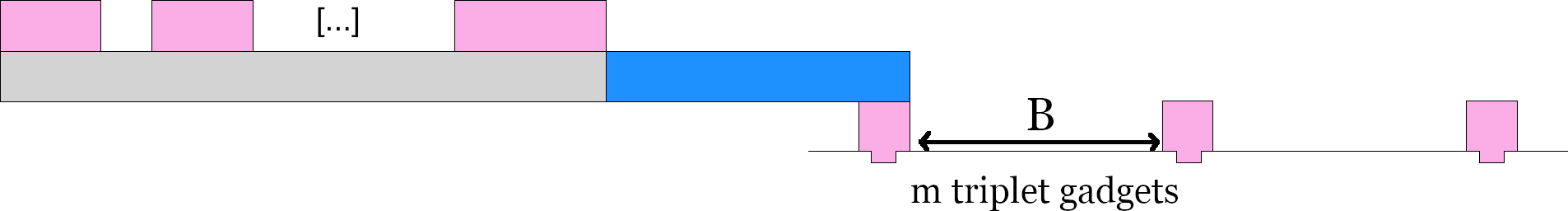}
            \caption{Adapted reduction for three lines and two colours}
            \label{twocoljel}
        \end{figure}

        \vspace{6pt}

        Once again, all the pink jellies can be merged if and only if the jellies at the top of the level represent integers for which \textsc{3-Partition} has a solution. 
        
        This concludes our proof of \ref{thm:2col} and our work on \textsc{Jelly} with a bounded height. 
        \end{proof}
        
        \subsection{1-Colour JELLY is NP-hard even with a restricted number of columns}\label{par:jellyLength}

		We now consider 1-Colour \textsc{Jelly} when the length of the board is restricted to five columns. We prove that the game is still NP-complete. Observe that the constant is much smaller than the one given in Theorem \ref{linesJell}. Furthermore the reduction is linear instead of quadratic. This highlights the asymmetry between horizontal and vertical restrictions mentioned in the introduction of Section \ref{restrictedJelly},

		Once again, membership in NP is known \cite{JellyNo}. To obtain NP-completeness, we show the following result:

		\begin{theorem}
		    \label{columns}
		    \textsc{Jelly} is NP-hard even with only one colour and five columns.
		\end{theorem}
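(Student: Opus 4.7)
I reduce from \textsc{3-Partition} on an instance $S=\{a_1,\ldots,a_{3m}\}$ with $\sum a_i = mB$ and $B/4<a_i<B/2$. The board has width exactly $5$ and height $O(mB)$, and all movable jellies are pink. Following the outline given at the start of Section~\ref{restrictedJelly}, I encode each $a_i$ as a pink jelly of width $1$ and height $a_i$ resting on its own width-$1$ fixed platform in the central column (column $3$). The $3m$ platforms are stacked vertically with clearance between them, so each integer jelly sits alone on its shelf and can be pushed one cell left (into column $2$) or one cell right (into column $4$) at the player's choice. Once pushed off, the jelly falls through the corresponding side shaft until it meets another jelly or the floor. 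This realises both capabilities flagged in the introduction of the section: free choice of the drop order and free choice of the side.

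I use columns~$1$--$2$ as a \emph{formation zone} and columns~$4$--$5$ as the \emph{solving zone}. The formation zone is shaped as a single well of height exactly $B$: three jellies dropped in succession stack and merge into a single pink jelly whose height is the sum of the three $a_i$'s. By the bound $B/4<a_i<B/2$, two jellies sum to strictly less than $B$ and four jellies sum to strictly more than $B$, so only groups of exactly three integers can produce a merged jelly of height $B$. The well's geometry is designed so that only a merged jelly whose top reaches the rim --- i.e.\ a jelly of height exactly $B$ --- can be slid horizontally out of the formation zone toward the solving side. A shorter merged jelly is stranded below the rim and cannot be separated again (once merged, jellies are one block), making the level permanently unsolvable; a taller stack cannot form inside the well at all. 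The solving zone is a vertical tower of $m$ pockets of height $B$ each, separated by fixed blocks and anchored at the bottom by a fixed pink target jelly. A height-$B$ triplet exiting the formation zone slides into the first empty pocket and settles there; this implements the ``take the newly-formed jelly to any depth we like'' capability mentioned in the introduction of the section.

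Correctness then follows in both directions. If \textsc{3-Partition} has a solution $S_1,\ldots,S_m$, the player processes the triplets one after another: for each $j$, push the three integer jellies of $S_j$ off their shelves into the formation zone, let them merge into a height-$B$ jelly, transfer it into the solving zone, and repeat. After $m$ rounds, all $3m$ integer jellies, the target, and every auxiliary pink block are merged, solving the puzzle. Conversely, any solution of the puzzle must fill each of the $m$ pockets with a stack of height exactly $B$, and by the arithmetic restriction on the $a_i$'s this stack must consist of exactly three integer jellies summing to $B$; recording which triplet fills which pocket yields a valid $3$-partition of $S$. The construction uses $O(mB)$ rows and constant width, giving a linear reduction, in agreement with the remark in the preamble of this section.

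\textbf{Main obstacle.} The delicate part will be engineering the formation zone, the passage between the two zones, and the pockets of the solving zone inside a total budget of only $5$ columns, while preserving two competing requirements: unmerged horizontal pushes must remain reversible, so that the player can retract wrong positional choices as long as no premature merge has occurred, while the passage must be strictly one-way in height, accepting a height-$B$ jelly only, and a previously filled pocket must be undisturbable by later drops. The numerical bound $B/4<a_i<B/2$ rules out every alternative grouping that sums to $B$, but the geometry alone must prevent the player from repairing an incorrect partial merge, from smuggling a stack of the wrong height into the solving side, or from toppling an already-settled pocket; carefully placing the fixed platforms and side walls so that gravity alone enforces these invariants is the heart of the construction.
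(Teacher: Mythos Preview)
Your plan diverges from the paper's construction and, more importantly, relies on a mechanism that does not work under Jelly-No's rules.

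The paper does \emph{not} use a single formation well that is emptied and reused. Instead it stacks $m$ independent triplet gadgets vertically, each containing two fixed $2\times 1$ pink jellies separated by a vertical gap of height exactly $B$; consecutive gadgets are linked by a short ``choice zone'' carrying a connector jelly on the left. The $3m$ integer jellies sit on width-$1$ shelves at the top right. When a jelly is dropped, it falls down the right side; at each choice zone the player decides whether it enters the current gadget (to help fill that gap) or continues falling to a deeper gadget. So each integer jelly is routed \emph{individually} to its destination and merges there in place. No merged jelly ever has to be moved out of a well.

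Your scheme requires precisely that extraction: a height-$B$ merged jelly sitting on the floor of the formation well must ``slide horizontally out'' toward the solving side. But in Jelly-No a block moves one cell sideways only if every destination cell is empty; for a height-$B$ jelly whose base rests on the well floor, the well's side wall blocks the lower cells, so the jelly cannot leave regardless of whether its top reaches the rim. Merging with an auxiliary pink piece does not help, since the merged block still occupies the cells trapped behind the wall. The claim that ``a taller stack cannot form inside the well at all'' is likewise unsupported: nothing stops a fourth jelly from being dropped on top. And even granting a miraculous exit, the height-$B$ jelly would have to cross column~$3$ (your shelf column) and then be steered to a particular pocket; neither passage is specified. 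Your ``main obstacle'' paragraph names these issues but does not resolve them. The missing idea is the paper's choice-zone routing: send single integer jellies to their target gadgets and let the merging happen in situ, instead of forming a triplet first and trying to transport it afterwards.
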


		\begin{proof}
		We proceed by reduction from the \textsc{3-Partition} problem.
		Let all blocks of jelly be pink.
		We represent each $a_i \in S$ by a jelly $A_i$ of width 1 and height $a_i$. At the beginning of the game, we store these jellies at the top right of the level, on top of each other separated by a platform of width 1. This way, the player can move the jellies $A_i$ in any order, and the first move of a jelly is always to fall on the left.

		We represent triplets by triplet gadgets of width 5 and height $B+2$ (see Figure \ref{reduction1} and Figure \ref{reduction2}).     
		Two jellies of size $2\times{1}$ are separated vertically by a space of height $B$. Grey walls ensure that they cannot move and that they do not fall. On the right of the bottom block, a platform allows the movable jellies to fall next to the space to be filled. To solve the puzzle, the player must fill this space with movable jellies in order to merge the two jellies into one. At the bottom right corner of the triplet, a hole leads to the triplet below, except for the one at the bottom of the board. Due to gravity and the disposition of the walls, the only way to connect the two jellies is to fill the space between them with a jelly of height exactly $B$. 
		        
		Two triplets are linked by a choice zone of height $\frac{B}{2}$ (or the height of the tallest movable jelly available at the beginning). It is composed of a jelly of height $\frac{B}{2}$ and width 1 at the very left that connects the bottom jelly of the upper triplet with the top jelly of the lower triplet, and is not accessible to the player due to the hole at the top of the lower triplet.

		    \begin{figure}[H]
		        \centering
		         \begin{subfigure}[t]{0.39\textwidth}
		            \centering
		             \includegraphics[height=0.35\textheight]{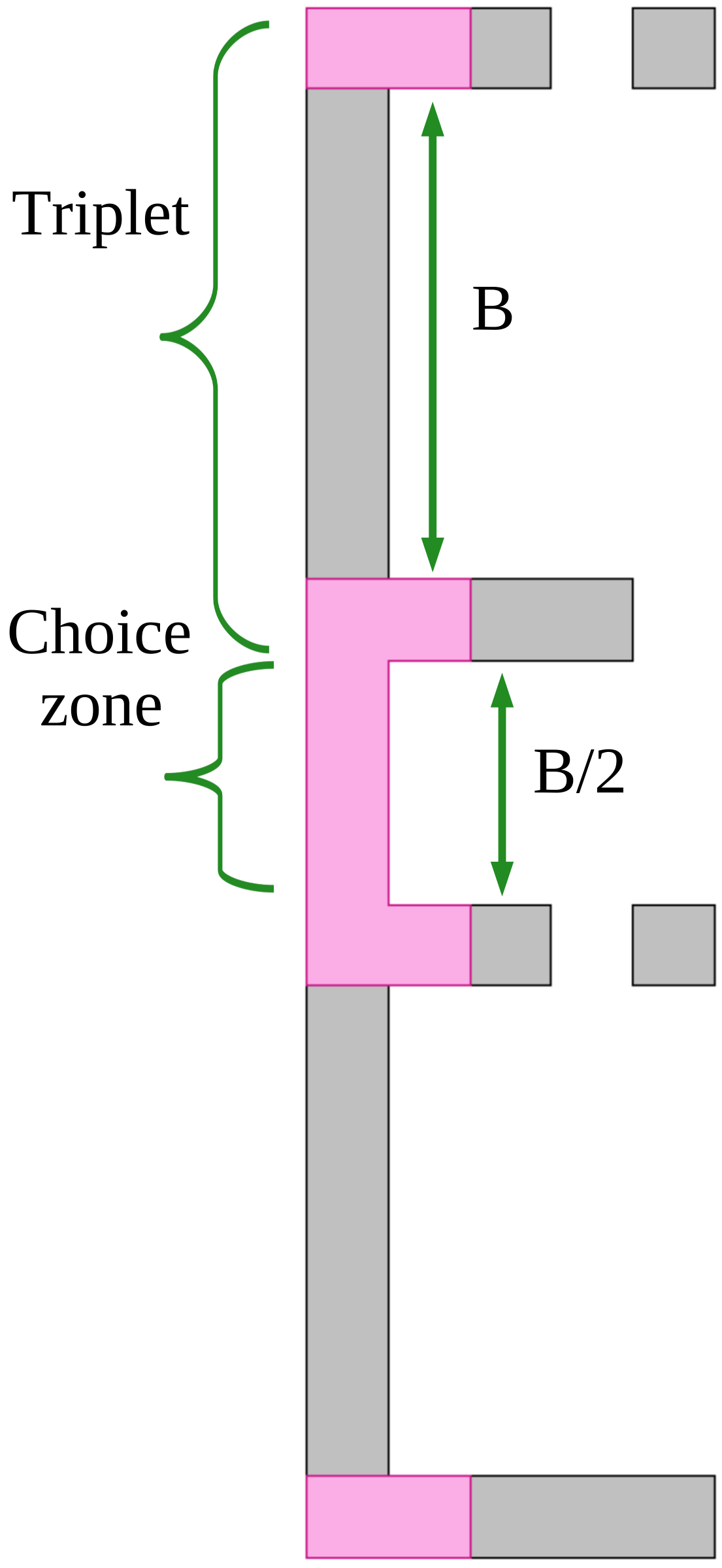}
		             \subcaption{\centering Two triplet gadgets linked by a choice zone}
		             \label{reduction1}
		         \end{subfigure}
		         \begin{subfigure}[t]{0.02\textwidth}
		            \centering
		            \;\;\;\;\;
		         \end{subfigure}
		         \begin{subfigure}[t]{0.57\textwidth}
		            \centering
		             \includegraphics[height=0.37\textheight]{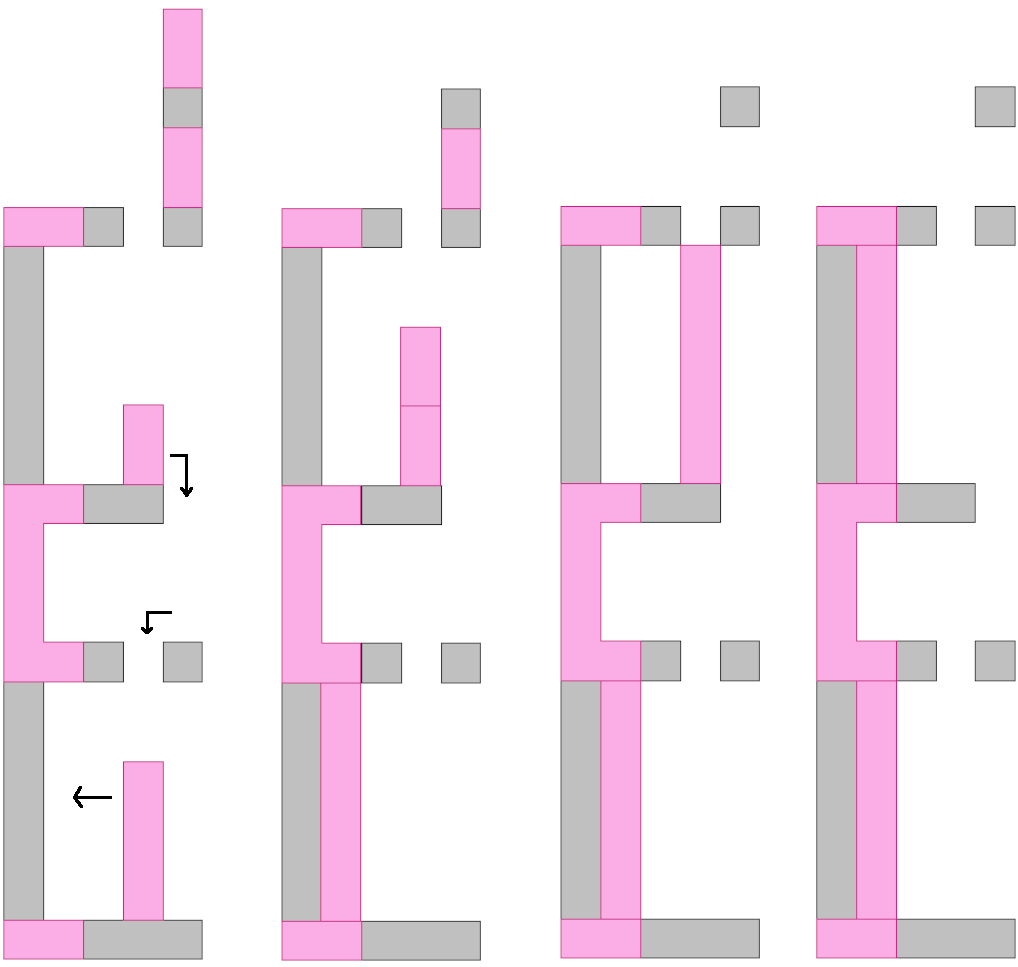}
		             \subcaption{Filling triplet gadgets using integer jellies stored at the top right of the board.}
		            \label{reduction2}
		         \end{subfigure}
		         \caption{Reduction to Jelly-No}
		    \end{figure}

		This choice zone allows to keep the width of the board to a constant of 5 columns while still making it possible for the player to move the current jelly any way needed.  For an instance of size $3\cdot m$, we need $m$ triplets.
		
		A key difference with the original construction from \cite{JellyNo}, which required a level of dimensions $O(B \cdot m) \cdot O(\log m)$, is the way gravity affects the solving process. Since moving a block horizontally (to the right or left) on some continuous platform is reversible, the choice zone between two gadgets allows to restrict our level to a constant number of columns.

		It is easy to see that solving the constructed level is possible if and only if the original instance has a solution.
		
		This concludes our proof of Theorem \ref{columns}.
		\end{proof}

\section{HANANO is NP-hard even with one bounded board dimension}\label{restrictedHanano}
        
    In this section we prove that 1-Colour \textsc{Hanano} remains NP-hard even with one bounded board dimension (either length or height). For these problems though, let us note that, unlike 1-Colour \textsc{Jelly}, it is not known whether they are in NP. Besides, we make use of movable grey blocks, which act similarly to black jellies from \textsc{Jelly}. 
    
In Section \ref{sec:hananoLines}, we prove that 1-Colour \textsc{Hanano} with bounded board height is NP-hard. In Section \ref{par:hananoLength} we adapt our proof of Section \ref{par:jellyLength} to show that 1-Colour \textsc{Hanano} with bounded board length is NP-hard. 
        
    \subsection{1-Colour HANANO is NP-hard even with a restricted number of lines}\label{sec:hananoLines}
    
        We prove the following result:
    	
    	\begin{theorem}
    		\label{hananoLines}
    		\textsc{Hanano} is NP-hard even with only one colour and eleven lines. Furthermore, we only need one coloured block and one flower block.
    	\end{theorem}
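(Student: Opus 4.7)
The plan is to reduce from \textsc{3-Partition}, closely following and adapting the construction of Theorem \ref{linesJell}. Given an instance $S = \{a_1, \ldots, a_{3m}\}$ with target sum $B$, I would construct a Hanano board of height $11$ in which each $a_i$ is represented by a grey movable block of dimensions $a_i \times 1$, initially stored at the top of the board on individual fixed platforms. At the bottom I would place $m$ triplet gadgets, each consisting of a horizontal gap of width exactly $B$ bounded by fixed walls, so that by the bounds $B/4 < a_i < B/2$ every filled gap must receive exactly three integer blocks whose widths sum to $B$. The single coloured block and the single flower are then positioned so that the bloom, which is the unique winning condition, is achievable if and only if every triplet gap is correctly filled.

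To route integer blocks to their chosen triplets, I would exploit two Hanano-specific features that make imitating the Jelly ``padding plus merging'' mechanism unnecessary: the player can push any movable block horizontally, and grey blocks may have arbitrary shape. Above the triplet gadgets I would install a network of fixed platforms acting as a routing zone where the player translates an integer block horizontally to the chosen chute before dropping it. The coupling between the \textsc{3-Partition} condition and the bloom can be obtained by placing the coloured block on top of a vertical pedestal whose height shrinks only as integer blocks are consumed by the triplet gaps, with the flower placed at a cell that becomes adjacent to the coloured block only in the configuration where every gap has been completely filled. The extra eleventh line (compared to the ten-line Jelly construction) accommodates this coloured-block/flower apparatus.

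Correctness would follow the template of Theorem \ref{linesJell}. In the forward direction, a valid partition yields an explicit scheduling of drops that fills every gap and lowers the coloured block onto the flower, triggering the bloom. In the reverse direction, if the bloom occurs, then the coloured block must have descended to the flower, which forces every integer block to be lodged in some triplet gap; the interval bounds on the $a_i$ and the total width budget $mB$ then imply that each gap is filled by exactly three blocks summing to $B$, giving a valid \textsc{3-Partition}.

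The main obstacle will be ruling out parasitic solving strategies specific to Hanano. Concretely I need to ensure that (i) no integer block can be stranded on an intermediate platform in a way that still allows the coloured block to reach the flower, (ii) the single bloom event, whose newly-created flower can push adjacent blocks in the arrow direction, cannot be used as a shortcut to bypass the gap-filling, and (iii) the horizontal $1\times 1$ swap rule combined with arbitrarily-shaped grey blocks cannot produce alternative paths to the flower. Carefully choosing the arrow direction of the coloured block, enclosing each triplet gadget with solid walls, and placing the flower flush against a wall on the bloom side should be sufficient to preclude these shortcuts.
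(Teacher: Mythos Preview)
Your proposal diverges from the paper's proof in its central mechanism, and the divergence exposes a genuine gap.

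The paper does \emph{not} reduce from plain \textsc{3-Partition}. It reduces from Numerical 3-Dimensional Matching (\textsc{ABC-Partition}), with three multisets $X,Y,Z$ and one block per integer whose \emph{shape} depends on which multiset it comes from ($\Gamma$-shaped of height~4 for $X$, mirrored~$\Gamma$ of height~4 for $Z$, $\Gamma$-shaped of height~2 for $Y$). The single red block sits at the far left and must walk horizontally across \emph{all} $m$ gaps to reach the flower at the far right; each gap becomes a bridge only when filled with one block of each shape, and the shapes lock into small notches in the floor so that the pieces cannot slide once placed. This locking is the crux: the paper explicitly notes that one must ``force the integer blocks in certain positions to prevent them from sliding horizontally, which would allow the red block to cross a gap without completely filling it.''

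Your construction misses exactly this point. With plain $a_i\times 1$ rectangular grey blocks sitting in a width-$B$ trough, nothing prevents the player from dropping, say, a single block into a gap and repeatedly sliding it left and right as a moving platform for the red block (or for whatever must cross). Hence a gap need not receive blocks summing to exactly $B$, and the reverse direction of your correctness argument fails. The paper's use of \textsc{ABC-Partition} together with interlocking $\Gamma$-shapes is precisely what rules out this parasitic strategy.

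Separately, your ``pedestal whose height shrinks only as integer blocks are consumed by the triplet gaps'' is not a mechanism that Hanano's physics supports: placing blocks \emph{into} remote gaps cannot cause a stack elsewhere to lose height. You would need to explain concretely how filling all gaps becomes a geometric precondition for the coloured block to reach the flower; the paper's answer (horizontal traversal over bridges) is simple and makes the ``only one coloured block, one flower'' claim immediate, whereas your pedestal idea is left as a black box.
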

    	
    	\begin{proof}
		We consider the Numerical 3-Dimensional Matching, also called the \textsc{ABC-Partition} problem, an NP-hard variant of \textsc{3-Partition} \cite{gareyjohnson}. The problem is the following:
		
		\textsc{Instance:} Two integers $m$ and $B$. Three disjoint multisets $X, Y, Z$ of $m$ integers each, such that $\displaystyle\sum_{a_i \in X\cup Y\cup Z}a_i$ = $m\cdot B$ and $\forall a_i \in X\cup Y\cup Z$, $\frac{B}{4} < a_i < \frac{B}{2}$.
		
		\textsc{Question:} Can we divide $(X\times Y \times Z)$ in $m$ triplets $S_i = \{x, y, z\}$ where $i = [1, \ldots, m], x \in X$, $y \in Y$ and $z \in Z$, such that for all $i = [1,\ldots, m], x + y + z = B$?
		
		Once again, we represent triplets as gaps to be filled with the appropriate movable grey blocks in order to allow a coloured block to cross the gap and reach a flower. We must ensure not only that each triplet contains one, and only one integer from each set, but also that the sum of each triplet is strictly equal to $B$. This requires to force the integer blocks in certain positions to prevent them from sliding horizontally, which would allow the red block to cross a gap without completely filling it. To enforce these conditions, we form our reduction as follows.
		
		A triplet gadget is a gap of length $B$ and height $3$, separating two tunnels leading to the previous (left) and next (right) triplets. The first triplet, on the left, contains a red block; the last one, on the right, contains a flower stuck to the ground which cannot move. 
		At the very left and very right of each triplet gadget, there is a gap of height and length 1. In the middle of the gadget, there is an unmovable platform of size $(B-2)\times 1$; on top of it there is a movable block of size $1 \times 1$. See Figure \ref{hananoLGadg} below.
		
		\begin{figure}[H]
		    \centering
		    \includegraphics[width=\textwidth]{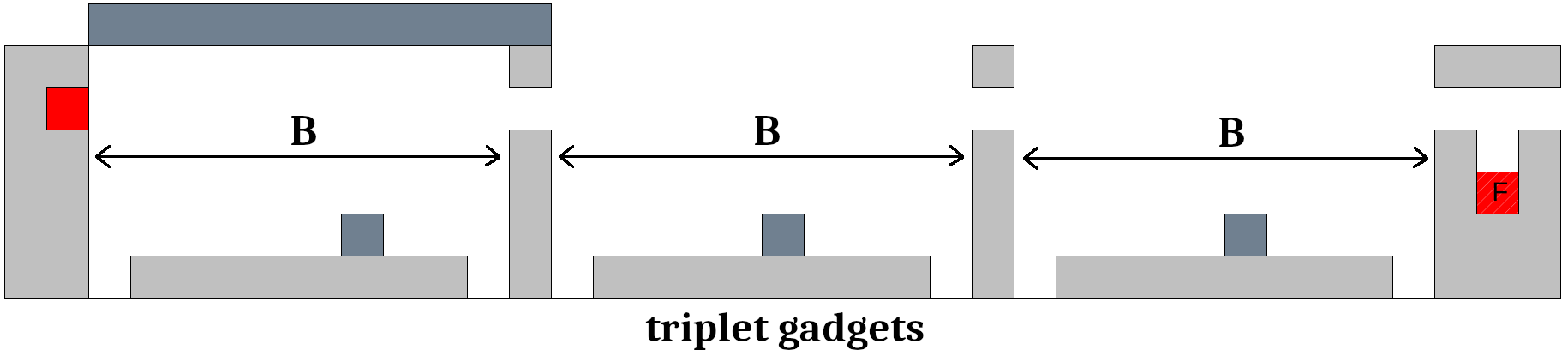}
		    \caption[]{3 triplet gadgets, with the movable platform to travel through them.}
		    \label{hananoLGadg}
		\end{figure}
		
		We represent the integers using movable blocks (represented in dark grey), of length equal to the integer they represent, and store them on a fixed platform at the very left of the level, with a movable platform allowing them to travel through the triplet gadgets. The shapes of the integer blocks vary depending on which set $X, Y$ or $Z$ the integers belong to. For integers $x_i \in X$ and $z_i \in Z$, the corresponding blocks will be of height $4$ and have the shape of a $\Gamma$ and a horizontally reversed $\Gamma$ respectively. These two blocks can fall to the left and right of each gadget to form parts of the bridge. The blocks corresponding to integers $y_i \in Y$ will have the same shape as the ones in $X$ but only be of height 2. 
				
		Clearly, if there exists a solution to the original instance of \textsc{ABC-Partition}, then the corresponding level can be completed according to this solution, by filling each gap with one block of each type: one block representing some $x_i \in X$ and one representing some $z_i \in Z$ at the left and right of the gadget, and using the extra block of size $1 \times 1$ to allow a block representing some $y_i \in Y$ to reach the right height to complete the bridge. See Figure \ref{hananoLFilled}.
		
		\begin{figure}[H]
		    \centering
		    \includegraphics[width=\textwidth]{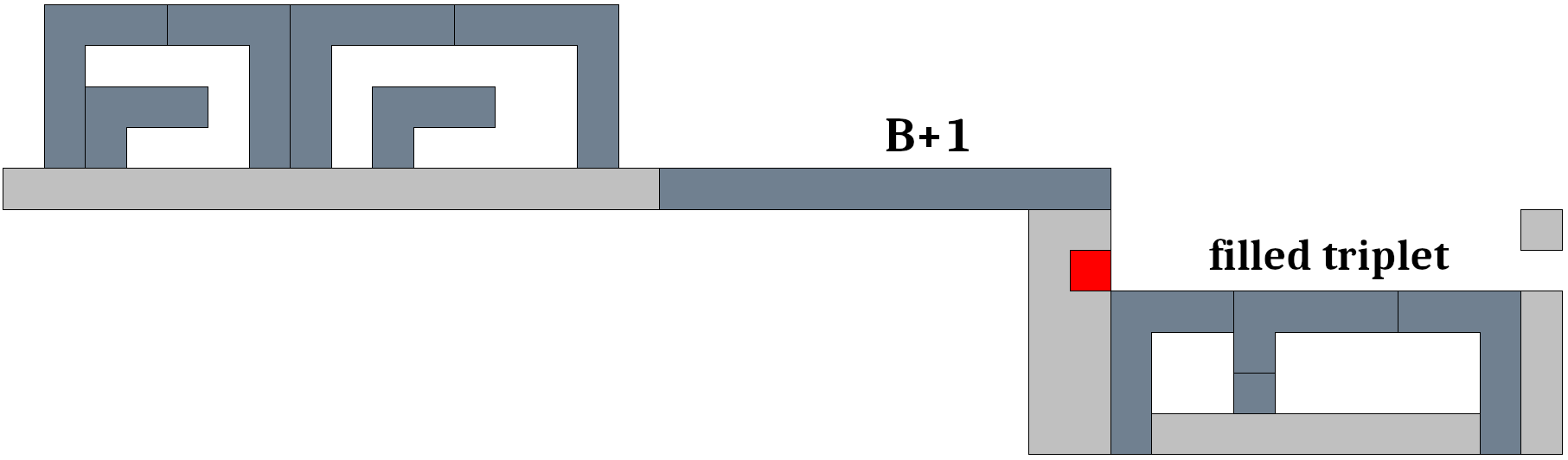}
		    \caption[]{Integer blocks stored at the top left of the level and triplet filled with one block of each shape, for sets $X$, $Y$ and $Z$ from left to right.}
		    \label{hananoLFilled}
		\end{figure}
		
		It remains to show that if the corresponding level can be completed, then the original instance has a solution. This requires to prove two things: first that it is necessary to use one block of each type in each triplet gadget, and second that the total length of these blocks must be exactly $B$.
		
		First, as visible on Figure \ref{fig:hanano_XYZ_impossible}, it is impossible to solve a triplet gadget using two $X$ integer blocks, as there is only one gap on the left to bring such block to the right height, to let the coloured block cross. This reasoning holds for $Z$. 

		\begin{figure}[H]
			\centering
			\includegraphics[width=0.4\textwidth]{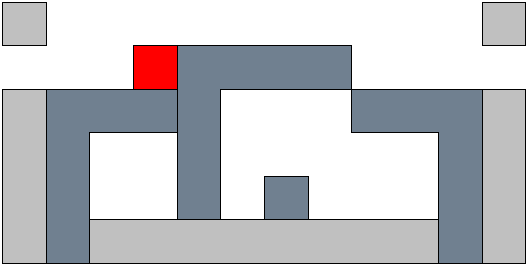}
			\caption{Trying to use two $X$ integer blocks for a same triplet}
			\label{fig:hanano_XYZ_impossible}
		\end{figure}
		
		Therefore, each triplet can have at most one $X$ block and one $Z$ block. Besides, once those blocks are in place, they cannot move.
		Now, due to the constraint on integers $a_i$, two blocks are not enough to fill a gap of length $B$: thus we need at least one $Y$ block to complete the gap.
	    Since we have $m$ blocks of each type for $m$ triplets, this means we must use exactly one block of each type in each triplet.
		
		Once a triplet is filled with one block of each type, none of these blocks can move without preventing the red block from crossing the gap, even if their total width is strictly smaller than $B$: the blocks of type $X$ and $Z$ must be fixed in the ground and once the block of type $Y$ is on the extra $1 \times 1$ block, it cannot move without falling. Thus using one block of each type requires their total width to be exactly $B$.

		Therefore solving our level is possible if, and only if, there exists a solution to the original instance.
		Besides, once again, the reduction is linear.
		
		This concludes our proof of Theorem \ref{hananoLines}.
    	\end{proof}
        

	\subsection{1-Colour HANANO is NP-hard even with a restricted number of columns}\label{par:hananoLength}

		In this section, we adapt our reduction of Section \ref{par:jellyLength} to prove that 1-Colour \textsc{Hanano} is also NP-hard with a bounded number of columns. 
		
		We consider a variant of \textsc{Hanano} where all the blocks and flowers have the same colour, the blocks can only bloom upward and the length of the board is restricted to six columns. We prove the following result:

		\begin{theorem}
		    \label{columnsHan}
		    \textsc{Hanano} is NP-hard even with only one colour and six columns.
		\end{theorem}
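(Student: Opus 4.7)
The plan is to adapt the construction of Theorem~\ref{columns} to the blooming mechanism of \textsc{Hanano}. I again reduce from \textsc{3-Partition}, stack $m$ triplet gadgets vertically, connect consecutive gadgets by the same choice-zone trick that keeps horizontal movement reversible, and store the $3m$ integer blocks at the top of the board, each block being a movable grey rectangle of width $1$ and height $a_i$, separated from its neighbours by a unit platform so that the player may drop them in any order. This yields a board whose length is a constant six columns (instead of five, the extra column making room for the coloured block and its flower), while the height grows linearly with $mB$.

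Each triplet gadget will contain a narrow vertical stacking chute of inner height $B$ into which integer blocks fall, together with a single $1 \times 1$ coloured block (pointing upward, as required by the restriction) and a single flower fixed at a position adjacent to the top of a correctly filled chute. The coloured block is initially held on a platform to the side of the chute in such a way that, after the player has dropped the desired integer blocks, it may be routed into the top of the chute, fall onto the resulting stack, and come to rest touching the flower precisely when the stack has height exactly $B$. The constraint $B/4 < a_i < B/2$ forces any such filling to consist of exactly three integer blocks summing to $B$: two blocks are too short, four or more blocks have combined height strictly greater than $B$ and cannot all fit under the ceiling of the chute, and three blocks bring the coloured block adjacent to the flower only if their lengths sum to exactly $B$. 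The choice zones, carried over essentially unchanged from Theorem~\ref{columns}, allow each integer block to be routed to whichever gadget the player wishes, while preventing it from leaving a gadget once inside.

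The principal obstacle I expect is controlling the side-effects of blooming in such a cramped layout. When the coloured block blooms, a new flower appears above it; in the restricted version of \textsc{Hanano} this new flower may push movable blocks upward and could in principle lift a grey block into an adjacent gadget, or provide a free one-cell boost that undermines the exact-height requirement somewhere else. To rule this out I would cap every chute with a fixed, unmovable platform so that the newly created flower has no movable neighbour to push, and ensure via the choice-zone geometry that no gadget-external block can ever come to rest directly above the coloured block of another gadget. Once these local containment conditions are verified, correctness mirrors that of Theorem~\ref{columns}: a valid solution of \textsc{3-Partition} yields a filling strategy that blooms every coloured block, and conversely any solved \textsc{Hanano} instance forces each chute to be filled to height exactly $B$ by three integer blocks, which is precisely a partition. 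The reduction is linear in the size of the \textsc{3-Partition} input, which establishes NP-hardness as claimed.
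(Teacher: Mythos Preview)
Your reduction follows the same blueprint as the paper's: reduce from \textsc{3-Partition}, stack $m$ triplet gadgets vertically, link consecutive gadgets by choice zones carried over from Theorem~\ref{columns}, represent each $a_i$ by a height-$a_i$ movable grey block stored at the top of the board, and force each gadget's vertical gap to be filled to height exactly $B$. The difference lies only in how the coloured block meets its flower inside a gadget. You drop the coloured block onto the top of the stack and place the flower beside the target height, which obliges you to worry about what the upward bloom pushes and about capping the chute without blocking the bloom itself. The paper instead puts the red block on a ledge to the \emph{left} of the $B\times 1$ gap and the flower on a ledge to the \emph{right} at the same height, so the block must walk horizontally across the top of the filled stack: it falls in if the stack is short of $B$ and is blocked from entering if the stack overshoots $B$. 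This horizontal-crossing mechanism sidesteps the bloom side-effects entirely, since once the red block reaches the flower it is boxed in by walls and the spawned flower has nothing to push. Your variant can be made to work with the extra care you outline; the paper's gadget is simply the cleaner realisation of the same idea.
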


		\begin{proof}
		We show once again that our problem is NP-hard by reduction from the \textsc{3-Partition} problem. 
		
		We want to construct a level of \textsc{Hanano} such that the level is solvable if and only if the corresponding instance of \textsc{3-Partition} is solvable. 

		Once again, we assume that all the flowers and blocks to be bloomed are red, walls or platforms, represented in light grey, are unmovable and dark grey blocks can be moved by the player, are affected by gravity and do not need to bloom in order for the level to be completed.
		We represent each $a_i \in S$ by such a movable grey block $A_i$ of width 1 and height $a_i$ and we store these blocks at the top right of the level, on top of each other. Again, we separate them by a platform of width 1 but this time this choice is just for readability purposes. 

		We adapt the triplet gadget and choice zone as to fit \textsc{Hanano} (see Figures \ref{hanano1} and \ref{hanano2}). Each gadget contains one red block on the very left and one red flower separated by a gap of size $B \times 1$. Triplets are separated by a choice zone that allows grey blocks to either go to the current gadget to fill the gap or continue to the ones below. 

		\begin{figure}[h]
		     \begin{minipage}[t]{0.50\textwidth}
		         \centering
		         \includegraphics[width=0.8\textwidth]{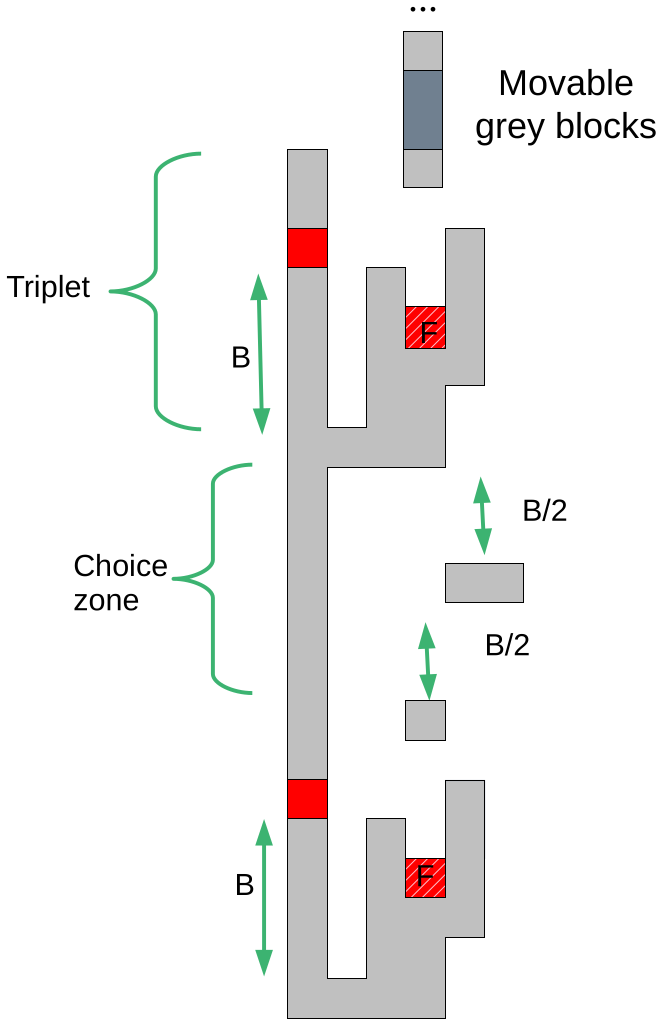}
		         \caption[width=\textwidth]{Triplet gadget and choice zone for Hanano, movable grey blocks $A_i$ at the top}
		         \label{hanano1}
		     \end{minipage}
		     \begin{minipage}[t]{0.50\textwidth}
		         \centering
		         \includegraphics[width=0.8\textwidth]{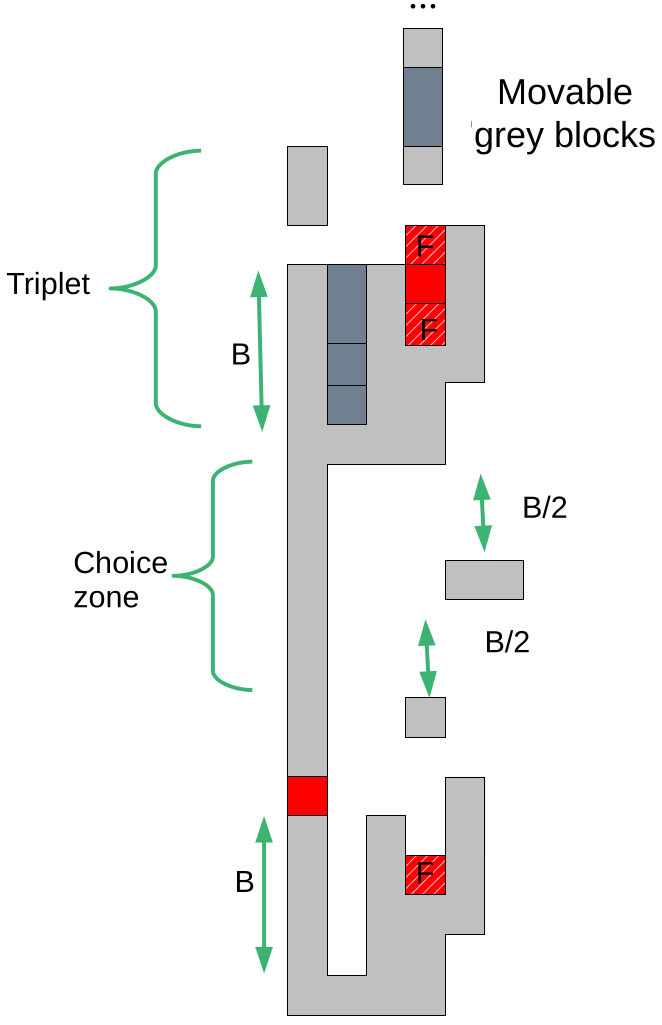}
		         \caption[width=0.9\textwidth]{Completed top triplet gadget.}
		        \label{hanano2}
		     \end{minipage}
		\end{figure}
		 
		 A level is once again composed of $m$ triplet gadgets separated by choice zones, and grey blocks stocked at the top. In order for the block on the left to reach the corresponding flower and bloom, the gap must be filled with blocks of a total height of $B$. 
		 At the beginning of the game, the only way for a red block to move is to go to the right. If the gap there is not filled then the block falls and can never bloom, so filling the space with a total height of $B$ is the only way to make it bloom and ensure the level can be completed. Moreover, the walls placed around the flower ensure that once a red block has bloomed it finds itself stuck. It cannot move to leave the triplet and possibly interact with other red blocks, so the only way for a given red block to bloom is to touch the flower corresponding to its triplet. 

		The rest of the proof is similar to the one in Section \ref{par:jellyLength}. If the instance of \textsc{3-Partition} has a solution then in the corresponding level, one can divide the grey blocks corresponding to $a_i$ to fit the $m$ gaps of height $B$, therefore reaching a total height of $m\cdot{B}$. 

		Now we must ensure that if the level of \textsc{Hanano} constructed by our reduction is solvable, then the corresponding instance of \textsc{3-Partition} has a solution. 
		Given any instance of \textsc{3-Partition}, we know that in the corresponding level of \textsc{Hanano} the only way to reach a total height of $B$ is to sum the heights of exactly 3 blocks. 
		Now, the only way to solve the corresponding level of 6 columns \textsc{Hanano} is to fill each of the $m$ gadgets with three grey blocks of total height exactly $B$, so a total height of $m\cdot B$: indeed, if the gap were filled with blocks of a height of more than $B$, it would also prevent the red block from reaching the flower.
		Since this is exactly the sum of the heights of all the grey blocks, the only way to solve the level is to fill each gadget with exactly three blocks. Besides, due to gravity, once a grey block is used in one gadget it cannot leave to go to another gadget. In total, we construct $m$ triplets of equal sum of heights $B$. This corresponds exactly to a solution of \textsc{3-Partition}. 
		
		This concludes our proof of Theorem \ref{columnsHan}.
		\end{proof}
	
\section{Conclusion and open questions}

    Our main result is that allowing black jellies renders \textsc{Jelly} PSPACE-complete even when all the other jellies have the same colour, by indirect reduction from the \textsc{NCL} problem, using related work on \textsc{Hanano} and the concept of visibility representations to circumvent the effects of gravity.
    
    We have also shown that both \textsc{Jelly} and \textsc{Hanano} are NP-hard even with one colour and a restricted number of columns, as well as for a restricted number of lines. In the case of \textsc{Jelly}, allowing a second colour results in a stronger bound and going from a quadratic to a linear reduction.
    
    However, the following open questions remain:
    \begin{enumerate}
        \item {Is there a constant $c$ such that 1-Colour \textsc{Hanano} with $c$ columns, or with $c$ lines, is in NP? For which we can prove PSPACE-hardness?}
        \item {Is \textsc{Jelly} with a constant, greater than one, number of colours still PSPACE-complete if black jellies cannot be used?}
        \item {Is 1-Colour \textsc{Jelly} with black jellies still PSPACE-hard if black jellies are rectangular? Same question for the NP-hardness of 1-Colour \textsc{Hanano} with a restricted number of lines.}
        \item {Is there a constant $c$ such that restricting 1-Colour \textsc{Jelly} or 1-Colour \textsc{Hanano} to levels of $c$ columns or $c$ lines allows for a polynomial-time algorithm?}
    \end{enumerate}
    
    The mechanism of gravity also opens interesting questions about the complexity of the games when this constraint is ignored or at least relaxed. The puzzle game Yugo, a variant of Jelly-No where jellies are allowed to move vertically by climbing on half-blocks, was recently shown by Naoya Kano, Katsuhisa Yamanaka and Takashi Hirayama \cite{yugo} to be PSPACE-complete with an unbounded number of colours. It would be interesting to know whether the 1-Colour restriction of this game remains PSPACE-complete.
    
    \medskip
    
\thanks{\textbf{Acknowledgements:} We would like to thank the anonymous reviewers for their careful reading of our manuscript. In particular, we would like to thank reviewer 1 for making a suggestion that significantly simplifies our proof of Theorem 2.1.}    
    

\bibliographystyle{unsrt}
\bibliography{refs} 

@article{gareyjohnson,
  title={Computers and intractability: a guide to the theory of {N}{P}-completeness ({M}ichael {R}. {G}arey and {D}avid {S}. {J}ohnson)},
  author={Hartmanis, Juris},
  journal={{S}{I}{A}{M} Review},
  volume={24},
  number={1},
  pages={90},
  year={1982},
  publisher={Society for Industrial and Applied Mathematics}
}

@InProceedings{JellyNo,
    author="Yang, Chao",
    editor="Akiyama, Jin
    and Marcelo, Reginaldo M.
    and Ruiz, Mari-Jo P.
    and Uno, Yushi",
    title="{O}n the {C}omplexity of {Jelly-{N}o-Puzzle}",
    booktitle="Discrete and Computational Geometry, Graphs, and Games",
    year="2021",
    publisher="Springer International Publishing",
    address="Cham",
    pages="165--174",
    abstract="Jelly-no-Puzzle is a combinatorial puzzle with gravity and colored blocks. It can be considered as a motion planning problem in an environment in which gravity is taken into account. The computational complexity of Jelly-no-Puzzle is studied in this paper. We show that it is NP-complete for one-color, and is {N}{P}-hard in the general case.",
    isbn="978-3-030-90048-9"
}

@article{HananoNPHard,
  author       = {Ziwen Liu and
                  Chao Yang},
  title        = {Hanano {P}uzzle is {N}{P}-hard},
  journal      = {Inf. Process. Lett.},
  volume       = {145},
  pages        = {6--10},
  year         = {2019}
}

@article{HananoJ,
title = {Defying {G}ravity and {G}adget {N}umerosity: {T}he {C}omplexity of the {H}anano {P}uzzle and beyond},
journal = {Information Processing Letters},
volume = {187},
pages = {106520},
year = {2025},
issn = {0020-0190},
doi = {https://doi.org/10.1016/j.ipl.2024.106520},
url = {https://www.sciencedirect.com/science/article/pii/S0020019024000504},
author = {Michael C. Chavrimootoo},
keywords = {Computational complexity, Irreversible games, Hardness of games, Minimizing gadgets},
abstract = {Using the notion of visibility representations, our paper establishes a new property of instances of the Nondeterministic Constraint Logic (NCL) problem (a PSPACE-complete problem that is very convenient to prove the PSPACE-hardness of reversible games with pushing blocks). Direct use of this property introduces an explosion in the number of gadgets needed to show PSPACE-hardness, but we show how to bring that number from 32 down to only three in the general case, and down to two in our specific case! We propose it as a step towards a broader and more general framework for studying games with irreversible gravity, and use this connection to guide an indirect polynomial-time many-one reduction from the NCL problem to the Hanano Puzzle—which is NP-hard—to prove it is PSPACE-complete.}
}

@InProceedings{hanano,
    author="Chavrimootoo, Michael C.",
    title={{D}efying {G}ravity and {G}adget {N}umerosity: {T}he {C}omplexity of the {H}anano {P}uzzle},
    booktitle={{D}escriptional {C}omplexity of {F}ormal {S}ystems},
    year={2023},
    publisher="Springer Nature Switzerland",
    address="Cham",
    pages="36--50",
    abstract="Using the notion of visibility representations, our paper establishes a new property of instances of the Nondeterministic Constraint Logic (NCL) problem (a {\$}{\$}{\backslash}textrm{\{}PSPACE{\}}{\$}{\$}PSPACE-complete problem that is very convenient to prove the {\$}{\$}{\backslash}textrm{\{}PSPACE{\}}{\$}{\$}PSPACE-hardness of reversible games with pushing blocks). Direct use of this property introduces an explosion in the number of gadgets needed to show {\$}{\$}{\backslash}textrm{\{}PSPACE{\}}{\$}{\$}PSPACE-hardness, but we show how to bring that number from 32 down to only three in general, and down to two in a specific case! We propose it as a step towards a broader and more general framework for studying games with irreversible gravity, and use this connection to guide an indirect polynomial-time many-one reduction from the NCL problem to the Hanano Puzzle---which is {\$}{\$}{\backslash}textrm{\{}NP{\}}{\$}{\$}NP-hard---to prove it is in fact {\$}{\$}{\backslash}textrm{\{}PSPACE{\}}{\$}{\$}PSPACE-complete.",
    isbn="978-3-031-34326-1",
    editor={Bordihn, Henning and Tran, Nicholas and Vaszil, Gy{\"o}rgy},
}

@misc{Qrostar,
      title={{J}elly-{N}o and {H}anano {P}uzzles}, 
      author={Qrostar},
      howpublished = {\url{https://qrostar.skr.jp/en/}},
}

@misc{IsNPHard,
      title={{J}elly-{N}o {P}uzzle is {N}{P}-complete (even with only 1 color) - {P}layable implementation},
      author={Chao Yang and Giorgio Ciotti},
      howpublished = {\url{https://www.isnphard.com/g/jelly-no-puzzle/}},
}

@article{Tetris,
author = {Asif, Sualeh and Coulombe, Michael and Demaine, Erik and Demaine, Martin and Hesterberg, Adam and Lynch, Jayson and Singhal, Mihir},
year = {2020},
month = {01},
pages = {942-958},
title = {Tetris is {N}{P}-hard even with O(1) {R}ows or {C}olumns},
volume = {28},
journal = {Journal of Information Processing},
doi = {10.2197/ipsjjip.28.942}
}

@Article{Tamassia1986,
author={Tamassia, Roberto
and Tollis, Ioannis G.},
title={A unified approach to visibility representations of planar graphs},
journal={Discrete {\&} Computational Geometry},
year={1986},
month={Dec},
day={01},
volume={1},
number={4},
pages={321-341},
abstract={We study visibility representations of graphs, which are constructed by mapping vertices to horizontal segments, and edges to vertical segments that intersect only adjacent vertex-segments. Every graph that admits this representation must be planar. We consider three types of visibility representations, and we give complete characterizations of the classes of graphs that admit them. Furthermore, we present linear time algorithms for testing the existence of and constructing visibility representations of planar graphs. Many applications of our results can be found in VLSI layout.},
issn={1432-0444},
doi={10.1007/BF02187705},
url={https://doi.org/10.1007/BF02187705}
}

@article{ncl,
title = {{PSPACE}-completeness of sliding-block puzzles and other problems through the nondeterministic constraint logic model of computation},
journal = {Theoretical Computer Science},
volume = {343},
number = {1},
pages = {72-96},
year = {2005},
note = {Game Theory Meets Theoretical Computer Science},
issn = {0304-3975},
doi = {https://doi.org/10.1016/j.tcs.2005.05.008},
url = {https://www.sciencedirect.com/science/article/pii/S0304397505003105},
author = {Robert A. Hearn and Erik D. Demaine},
keywords = {Combinatorial puzzles, Sliding blocks, Computational complexity, Hardness},
abstract = {We present a nondeterministic model of computation based on reversing edge directions in weighted directed graphs with minimum in-flow constraints on vertices. Deciding whether this simple graph model can be manipulated in order to reverse the direction of a particular edge is shown to be PSPACE-complete by a reduction from Quantified Boolean Formulas. We prove this result in a variety of special cases including planar graphs and highly restricted vertex configurations, some of which correspond to a kind of passive constraint logic. Our framework is inspired by (and indeed a generalization of) the “Generalized Rush Hour Logic” developed by Flake and Baum [Theoret. Comput. Sci. 270(1–2) (2002) 895]. We illustrate the importance of our model of computation by giving simple reductions to show that several motion-planning problems are PSPACE-hard. Our main result along these lines is that classic unrestricted sliding-block puzzles are PSPACE-hard, even if the pieces are restricted to be all dominoes (1×2 blocks) and the goal is simply to move a particular piece. No prior complexity results were known about these puzzles. This result can be seen as a strengthening of the existing result that the restricted Rush HourTM puzzles are PSPACE-complete [Theoret. Comput. Sci. 270(1–2) (2002) 895], of which we also give a simpler proof. We also greatly strengthen the conditions for the PSPACE-hardness of the Warehouseman's Problem [Int. J. Robot. Res. 3(4) (1984) 76], a classic motion-planning problem. Finally, we strengthen the existing result that the pushing-blocks puzzle Sokoban is PSPACE-complete [In: Proc. Internat. Conf. on Fun with Algorithms, Elba, Italy, June 1998, pp. 65–76.], by showing that it is PSPACE-complete even if no barriers are allowed.}
}

@inproceedings{yugo,
  title={Computational {C}omplexity of {Y}ugo Puzzle},
  author={Kano, Naoya and Yamanaka, Katsuhisa and Hirayama, Takashi},
  booktitle={IEICE Conferences Archives},
  year={2023},
  organization={The Institute of Electronics, Information and Communication Engineers}
}

@article{Savitch,
title = {Relationships between nondeterministic and deterministic tape complexities},
journal = {Journal of Computer and System Sciences},
volume = {4},
number = {2},
pages = {177-192},
year = {1970},
issn = {0022-0000},
doi = {https://doi.org/10.1016/S0022-0000(70)80006-X},
url = {https://www.sciencedirect.com/science/article/pii/S002200007080006X},
author = {Walter J. Savitch},
abstract = {The amount of storage needed to simulate a nondeterministic tape bounded Turingmachine on a deterministic Turing machine is investigated. Results include the following: Theorem. A nondeterministic L(n)-tape bounded Turing machine can be simulated by a deterministic [L(n)]2-tape bounded Turing machine, provided L(n)≥log2n. Computations of nondeterministic machines are shown to correspond to threadings of certain mazes. This correspondence is used to produce a specific set, namely the set of all codings of threadable mazes, such that, if there is any set which distinguishes nondeterministic tape complexity classes from deterministic tape complexity classes, then this is one such set.}
}


\end{document}